\documentclass[final, conference, fontsize=10pt]{IEEEtran}
\ifCLASSINFOpdf
\else
\fi

\usepackage{graphicx,graphics,epsfig,epstopdf,float}
\usepackage{amsmath} 
\usepackage{amssymb}  
\usepackage{mathrsfs}
\usepackage{enumerate}
\usepackage{subfigure}
\usepackage{booktabs}
\usepackage{colortbl}
\usepackage{color}  
\usepackage{bm}
\usepackage{psfrag}
\usepackage{cite}
\usepackage{algorithm}
\usepackage{algorithmicx}
\usepackage{algpseudocode}
\usepackage{mdwlist}
\usepackage{longtable}
\usepackage{array}
\usepackage{acronym}  
\usepackage{svg}
\usepackage{bm}
\usepackage{url}
\usepackage{xcolor}
\usepackage{stfloats}
\usepackage{cases}
\def\BibTeX{{\rm B\kern-.05em{\sc i\kern-.025em b}\kern-.08em
    T\kern-.1667em\lower.7ex\hbox{E}\kern-.125emX}}

\DeclareMathAlphabet{\mathsfbr}{OT1}{cmss}{m}{n}
\SetMathAlphabet{\mathsfbr}{bold}{OT1}{cmss}{bx}{n}
\DeclareRobustCommand{\msf}[1]{%
  \ifcat\noexpand#1\relax\msfgreek{#1}\else\mathsfbr{#1}\fi
}

\makeatletter
\newcommand{\msfgreek}[1]{\csname s\expandafter\@gobble\string#1\endcsname}
\makeatother

\DeclareFontEncoding{LGR}{}{} 
\DeclareSymbolFont{sfgreek}{LGR}{cmss}{m}{n}
\SetSymbolFont{sfgreek}{bold}{LGR}{cmss}{bx}{n}
\DeclareMathSymbol{\salpha}{\mathord}{sfgreek}{`a}
\DeclareMathSymbol{\sbeta}{\mathord}{sfgreek}{`b}
\DeclareMathSymbol{\sgamma}{\mathord}{sfgreek}{`g}
\DeclareMathSymbol{\sdelta}{\mathord}{sfgreek}{`d}
\DeclareMathSymbol{\sepsilon}{\mathord}{sfgreek}{`e}
\DeclareMathSymbol{\szeta}{\mathord}{sfgreek}{`z}
\DeclareMathSymbol{\seta}{\mathord}{sfgreek}{`h}
\DeclareMathSymbol{\stheta}{\mathord}{sfgreek}{`j}
\DeclareMathSymbol{\siota}{\mathord}{sfgreek}{`i}
\DeclareMathSymbol{\skappa}{\mathord}{sfgreek}{`k}
\DeclareMathSymbol{\slambda}{\mathord}{sfgreek}{`l}
\DeclareMathSymbol{\smu}{\mathord}{sfgreek}{`m}
\DeclareMathSymbol{\snu}{\mathord}{sfgreek}{`n}
\DeclareMathSymbol{\sxi}{\mathord}{sfgreek}{`x}
\DeclareMathSymbol{\somicron}{\mathord}{sfgreek}{`o}
\DeclareMathSymbol{\spi}{\mathord}{sfgreek}{`p}
\DeclareMathSymbol{\srho}{\mathord}{sfgreek}{`r}
\DeclareMathSymbol{\ssigma}{\mathord}{sfgreek}{`s}
\DeclareMathSymbol{\stau}{\mathord}{sfgreek}{`t}
\DeclareMathSymbol{\supsilon}{\mathord}{sfgreek}{`u}
\DeclareMathSymbol{\sphi}{\mathord}{sfgreek}{`f}
\DeclareMathSymbol{\schi}{\mathord}{sfgreek}{`q}
\DeclareMathSymbol{\spsi}{\mathord}{sfgreek}{`y}
\DeclareMathSymbol{\somega}{\mathord}{sfgreek}{`w}

\DeclareMathSymbol{\svarsigma}{\mathord}{sfgreek}{`c}

\DeclareMathSymbol{\sGamma}{\mathalpha}{sfgreek}{`G}
\DeclareMathSymbol{\sDelta}{\mathalpha}{sfgreek}{`D}
\DeclareMathSymbol{\sTheta}{\mathalpha}{sfgreek}{`J}
\DeclareMathSymbol{\sLambda}{\mathalpha}{sfgreek}{`L}
\DeclareMathSymbol{\sXi}{\mathalpha}{sfgreek}{`X}
\DeclareMathSymbol{\sPi}{\mathalpha}{sfgreek}{`P}
\DeclareMathSymbol{\sSigma}{\mathalpha}{sfgreek}{`S}
\DeclareMathSymbol{\sUpsilon}{\mathalpha}{sfgreek}{`U}
\DeclareMathSymbol{\sPhi}{\mathalpha}{sfgreek}{`F}
\DeclareMathSymbol{\sPsi}{\mathalpha}{sfgreek}{`Y}
\DeclareMathSymbol{\sOmega}{\mathalpha}{sfgreek}{`W}

\DeclareRobustCommand{\mcal}[1]{%
  \ifcat\noexpand#1\relax\mathnormal{#1}\else\cal{#1}\fi
}
\DeclareRobustCommand{\BM}[1]{%
  \ifcat\noexpand#1\relax\bm{\boldUppercaseItalicGreek{#1}}\else\bm{#1}\fi
}
\makeatletter
\newcommand{\boldUppercaseItalicGreek}[1]{\csname var\expandafter\@gobble\string#1\endcsname}
\makeatother
\newcommand{\rv}[1]{\MakeLowercase{\msf{#1}}} 
\newcommand{\RV}[1]{\bm{\MakeLowercase{\msf{#1}}}}  
\newcommand{\RM}[1]{\bm{\MakeUppercase{\msf{#1}}}}  

\newcommand{\V}[1]{\bm{#1}} 
\newcommand{\M}[1]{\BM{#1}} 
\newcommand{\zz}[1]{\mathrm{#1}} 
\newcommand{\ermtrans}{^\text{H}}
\newcommand{\trans}{^\text{T}}

\definecolor{BLUE}{rgb}{0,0,1}
\newtheorem{lemma}{Lemma}
\newtheorem{theorem}{Theorem}
\newtheorem{proposition}{Proposition}

\newtheorem{definition}{Definition}
\newtheorem{remark}{Remark}

\newenvironment{proof}[1][Proof]%
  {{\indent \it #1:\quad}}%
  {\hfill \IEEEQED\par}

\renewcommand{\IEEEQED}{\IEEEQEDopen} 
\acrodef{gnss}[GNSS]{global navigation satellite system}
\acrodef{rf}[RF]{radio frequency}
\acrodef{aoa}[AOA]{angle-of-arrival}
\acrodef{rss}[RSS]{received signal strength}
\acrodef{toa}[TOA]{time-of-arrival}
\acrodef{tdoa}[TDOA]{time-difference-of-arrival}
\acrodef{rtt}[RTT]{round-trip time}
\acrodef{fdd}[FDD]{frequency division duplex}
\acrodef{tdd}[TDD]{time division duplex}
\acrodef{fd}[FD]{full-duplex}
\acrodef{sdp}[SDP]{semidefinite programming}
\acrodef{crlb}[CRLB]{Cram\'{e}r-Rao lower bound}
\acrodef{nc}[NC]{narrow correlator}
\acrodef{sc}[SC]{storbe correlator}
\acrodef{pll}[PLL]{phase locked loop}
\acrodef{mp}[MP]{multipath}
\acrodef{sp}[SP]{single path}
\acrodef{ff}[FF]{flat fading}
\acrodef{mds}[MDS]{multidimensional scaling}
\acrodef{snr}[SNR]{signal-to-noise ratio}
\acrodef{los}[LOS]{line-of-sight}
\acrodef{nlos}[NLOS]{non-line-of-sight}
\acrodef{sic}[SIC]{serial interference cancelation}
\acrodef{pic}[PIC]{parallel interference cancelation}
\acrodef{adc}[ADC]{analog-to-digital converter}
\acrodef{bp}[BP]{basis pursuit}
\acrodef{lasso}[LASSO]{least absolute shrinkage and selection operator}
\acrodef{omp}[OMP]{orthogonal matching pursuit}
\acrodef{lls}[LLS]{linear least squares}
\acrodef{wlls}[WLLS]{weighted linear least squares}
\acrodef{nlls}[NLLS]{nonlinear least squares}
\acrodef{awgn}[AWGN]{additive white Gaussian noise}
\acrodef{cirf}[CIRF]{channel impulse response function}
\acrodef{irf}[IRF]{impulse response function}
\acrodef{llr}[LLR]{log-likelihood ratio}
\acrodef{llrs}[LLRs]{log-likelihood ratios}
\acrodef{fim}[FIM]{Fisher information matrix}
\acrodef{efim}[EFIM]{equivalent Fisher information matrix}
\acrodef{mse}[MSE]{mean squared error}
\acrodef{peb}[PEB]{position error bound}
\acrodef{rmse}[RMSE]{root mean squared error}
\acrodef{seb}[SEB]{synchronization error bound}
\acrodef{imu}[IMU]{inertial measurement unit}
\acrodef{gm}[GM]{Gaussian-mixture}
\acrodef{reb}[REB]{ranging error bound}
\acrodef{co}[CO]{clock offset}
\acrodef{pdf}[PDF]{probability density function}%
\acrodef{iot}[IoT]{internet-of-things}
\acrodef{los}[LOS]{line-of-sight}
\acrodef{nlos}[NLOS]{non-line-of-sight}
\acrodef{phd}[PHD]{probability hypothesis density}
\acrodef{slam}[SLAM]{simultaneous localization and mapping}
\acrodef{mpc}[MPC]{multipath component}
\acrodef{toa}[ToA]{time-of-arrival}
\acrodef{aoa}[AoA]{angle-of-arrival}
\acrodef{vt}[VT]{virtual transmitter}
\acrodef{rbpf}[RBPF]{Rao-Blackwellised particle filter}
\acrodef{GM}{Gaussian-mixture}
\acrodef{uwb}[UWB]{ultra-wideband}
\acrodef{rfs}[RFS]{random finite set}
\acrodef{cir}[CIR]{channel impulse response}
\acrodef{ekf}[EKF]{extended Kalman filter}
\acrodef{fisst}[FISST]{finite-set statistics}
\acrodef{map}[MAP]{maximum-a-posteriori}
\acrodef{cdf}[CDF]{cumulative distribution function}
\acrodef{mmse}[MMSE]{minimum-mean-square-error}
\acrodef{xr}[XR]{extended reality}
\acrodef{gc}[GC]{Gaussian component}
\acrodef{ris}[RIS]{reconfigurable intelligent surface}
\acrodef{rhs}[RHS]{right-hand side}
\acrodef{fov}[FoV]{field of view}
\acrodef{da}[DA]{data association}
\acrodef{smc}[SMC]{sequential Monte Carlo}
\acrodef{eap}[EAP]{expected-a-posteriori}
\acrodef{ppp}[PPP]{Poisson point process}
\acrodef{isac}[ISAC]{integrated sensing and communication}
\acrodef{mimo}[MIMO]{multiple-input and multiple-output}
\acrodef{siso}[SISO]{single-input and single-output}
\acrodef{csi}[CSI]{channel state information}
\acrodef{iid}[i.i.d.]{independently and identically distributed}
\acrodef{mcmc}[MCMC]{Markov chain Monte Carlo}
\acrodef{itur}[ITU-R]{radio communication division of the international telecommunications union}
\acrodef{hrllc}[HRLLC]{hyper-reliable and low-latency communications}
\acrodef{v2x}[V2X]{vehicular-to-everything}
\acrodef{dmc}[DMC]{discrete memoryless channel}
\acrodef{crb}[CRB]{Cram\'{e}r-Rao bound}
\acrodef{sac}[S\&C]{sensing and communication}
\acrodef{miso}[MISO]{multiple-input and single-output}
\acrodef{bs}[BS]{base station}
\acrodef{ue}[UE]{user equipment}
\acrodef{bcrb}[BCRB]{Bayesian Cram\'{e}r-Rao bound}
\acrodef{ecrb}[ECRB]{expected Cram\'{e}r-Rao bound}
\acrodef{bfim}[BFIM]{Bayesian Fisher information matrix}
\acrodef{dmt}[DMT]{diversity-multiplexing tradeoff}
\acrodef{svd}[SVD]{singular value decomposition}
\acrodef{evd}[EVD]{eigenvalue decomposition}
\acrodef{dof}[DoF]{Degree of Freedom}

\newcommand{\paperTitle}{
Sensing-Constrained Diversity-Multiplexing Tradeoff in MIMO ISAC: A Geometric Approach
}
\newcommand{\Complex}{\mathbb{C}}
\newcommand{\Real}{\mathbb{R}}

\newcommand{\Prob}{\mathbb{P}}
\newcommand{\Expect}{\mathbb{E}}

\newcommand{\hc}{\RM{H}_\zz{c}}
\newcommand{\hcnb}{\M{H}_\zz{c}}
\newcommand{\nc}{N_\zz{c}}
\newcommand{\snrc}{\eta_{\zz{c}}}
\newcommand{\snrs}{\eta_{\zz{s}}}
\newcommand{\rank}[1]{\zz{rk}(\M{#1})}
\newcommand{\indep}{\perp \!\!\!\! \perp}

\begin{document}
\setlength{\columnsep}{0.201 in}
\title{\paperTitle}
\author{
    Yinuo Du$^*$,
    Ziping Lu$^*$,
    Xiao Shen$^*$,
    Hanying Zhao$^\dagger$,
    and Yuan Shen$^*$\\
    $^*$Department of Electronic Engineering, BNRist, Tsinghua University, Beijing, China\\
     $^\dagger$Division of Information Science and Engineering, KTH Royal Institute of Technology, Stockholm, Sweden\\
     Email: \{duyn24, lzp23, shenx20\}@mails.tsinghua.edu.cn, hanying@kth.se, shenyuan\_ee@tsinghua.edu.cn
}
\maketitle

\begin{abstract}
Diversity and multiplexing are the two fundamental gains of \ac{mimo} communications, enabling systems to simultaneously achieve increased reliability and higher data rates. The intricate interplay between these two metrics is captured by the celebrated \ac{dmt}. With the rapid evolution of wireless technologies, low-latency \ac{isac} has emerged as a key enabler for 6G applications, including \ac{xr} and massive digital twins. Consequently, understanding the \ac{dmt} within \ac{mimo} \ac{isac} systems becomes critical. In this paper, we investigate the communication \ac{dmt} in a mono-static \ac{mimo} \ac{isac} system under Rayleigh fading, specifically when the transmitter is constrained to emit sensing-optimal waveforms. By unveiling the geometric properties of generalized Stiefel manifolds and employing large-deviation analysis, we characterize the asymptotic outage probability of this typical \ac{isac} channel. This formulation yields an elegant converse bound on the sensing-constrained \ac{dmt}. Ultimately, our work provides an answer to a pivotal unanswered question in \ac{isac} system design: How much \ac{mimo} gain is fundamentally sacrificed in communication to integrate optimal sensing capabilities?
\end{abstract}

\acresetall
\begin{IEEEkeywords}
Integrated sensing and communication, MIMO communication, Diversity-multiplexing tradeoff, Stiefel manifold
\end{IEEEkeywords}

\acresetall	
\section{Introduction}\label{sec:intro}
Recognized as a pivotal technology for 6G \cite{ITU:R23}, \ac{isac} outperforms traditional separated designs by efficiently sharing hardware and spectral resources \cite{XioLiuCui:J23}. To fully realize its potential, \ac{mimo} technology is indispensable: it endows the system with high spatial resolution for sensing while unlocking critical diversity and multiplexing gains for communication \cite{GonFurKal:J24}. Consequently, \ac{mimo} \ac{isac} has emerged as a cornerstone for next-generation applications such as autonomous vehicular networks and digital twins.

Link-level \ac{isac} encompasses two primary paradigms: mono-static \cite{XioLiuCui:J23,AhmKobWig:J22}, where Tx performs sensing using known codewords, and bi-static \cite{SheLuZha:J25}, where Rx jointly senses and decodes. Due to Tx's perfect knowledge of the transmitted signals, the mono-static approach achieves superior sensing accuracy and is thus of greater practical interest \cite{GonFurKal:J24}. Theoretically, \ac{isac} is studied under either the infinite blocklength assumption \cite{XioLiuCui:J23,AhmKobWig:J22} using Shannon capacity, or the finite blocklength regime \cite{SheLuZha:J25} where decoding error probabilities are strictly non-zero. Since 6G targets ultra-low latency \cite{ITU:R23} and sensing is inherently time-sensitive, this paper specifically investigates a mono-static \ac{mimo} \ac{isac} channel within the finite blocklength regime.

In \ac{mimo} communications, the fundamental interplay between reliability and data rates is elegantly captured by the \ac{dmt} \cite{ZheTse:J03}. For \ac{mimo} \ac{isac} systems, the theoretical sensing-communication tradeoff has been recently established using the \ac{crb} and achievable rate \cite{XioLiuCui:J23}. However, this ergodic, zero-error formulation is incompatible with 6G applications such as \ac{hrllc} and neglects the diversity gain brought by \ac{mimo}. Consequently, to reflect the finite-blocklength realities of practical \ac{mimo} \ac{isac} systems, characterizing the sensing-communication tradeoff directly through the \ac{dmt} framework is of critical importance.

This paper investigates the communication \ac{dmt} of a mono-static \ac{mimo} \ac{isac} Rayleigh fading channel in the finite-blocklength regime. Under the strict constraint of emitting sensing-optimal waveforms, transmit codewords are geometrically confined to a generalized Stiefel manifold. Leveraging Riemannian geometry and large-deviation analysis, we characterize the asymptotic outage probability to derive an elegant converse bound on the sensing-constrained \ac{dmt}. Ultimately, we provide a preliminary answer to the pivotal unanswered question in \ac{mimo} \ac{isac}: How much \ac{mimo} communication gain must be sacrificed to guarantee optimal sensing?

\textit{Notations:} Random variables, vectors, and matrices are denoted by $\rv{x}$, $\RV{x}$, and $\RM{X}$, with their respective realizations given by $x$, $\V{x}$, and $\M{X}$. For a vector $\V{x}$, $\V{x}_{a:b}$ denotes the vector $\left(x_a,x_{a+1},...,x_{b}\right)$. The relation $a(x)\overset{.}{\gtreqqless} b(x)$ signifies that $\lim_{x\to \infty}a(x)/b(x)\gtreqqless 1$. We also use $(\cdot)^{+}\triangleq\max \{0,\cdot\}$.

\section{Problem Formulation}\label{sec:sysmod}
\subsection{\ac{mimo} \ac{isac} Model and Scheme}
We consider a mono-static \ac{mimo} \ac{isac} system where a \ac{bs} down-link communicates to a \ac{ue} while simultaneously performing radar sensing. The \ac{bs} is equipped with $M$ Tx antennas and $N_\zz{s}$ Rx antennas, whereas the \ac{ue} is equipped with $N_\zz{c}$ Rx antennas. Specifically, the \ac{sac} channels are given by:
\begin{subnumcases}{\label{eq:channelmodel}}
    \RM{Y}_\zz{s}=\sqrt{{\snrs}/{M}}\RM{H}_\zz{s}\RM{X}+\RM{Z}_\zz{s}, \label{eq:channelmodel_s}\\
    \RM{Y}_\zz{c}=\sqrt{{\snrc}/{M}}\RM{H}_\zz{c}\RM{X}+\RM{Z}_\zz{c} \label{eq:channelmodel_c}
\end{subnumcases}
where $\RM{Y}_\zz{s}$ and $\RM{Y}_\zz{c}$ denote the received \ac{sac} signals, and $\RM{X}\in\Complex^{M\times T}$ is the transmitted \ac{isac} codeword over a blocklength of $T\geq M$. 
$\RM{H}_\zz{s} = \M{G}(\RV{\eta}) \in \Complex^{N_\zz{s}\times M}$ represents the sensing channel response uniquely determined by a latent random parameter $\RV{\eta}\in\Real^{K}$ via an injective mapping $\M{G}(\cdot)$. The communication channel $\RM{H}_\zz{c}\in\Complex^{N_\zz{c}\times M}$ undergoes quasi-static Rayleigh fading and is assumed to be independent of $\RM{H}_\zz{s}$. The terms $\RM{Z}_\zz{s}$ and $\RM{Z}_\zz{c}$ represent additive white Gaussian noise. 
Without loss of generality, we assume $\RM{X},\RM{H}_\zz{s},\RM{H}_\zz{c}$ have normalized power, i.e. $\Expect\left(\text{tr}\left(\RM{X}\RM{X}\ermtrans\right)\right)/MT=\Expect\left(\text{tr}\left(\RM{H}_\zz{s}\RM{H}_\zz{s}\ermtrans\right)\right)/N_\zz{s}M=1$, and the entries of $\RM{H}_\zz{c}$, $\RM{Z}_\zz{s}$, and $\RM{Z}_\zz{c}$ are modeled as i.i.d. $\mathcal{CN}(0,1)$. Under these normalizations, $\snrs$ and $\snrc$ define the receive \acp{snr} at the sensing Rx and \ac{ue}, respectively.

For an \ac{isac} transmission, the \ac{bs} first selects a message $\rv{j}$ uniformly from the set $\mathcal{J}=\{1,2,...,J\}$ and maps $\rv{j}$ to a codeword $\RM{X} = \V{\mu}(\rv{j})$ from the codebook $\mathcal{X} \subset \Complex^{M\times T}$, where $\V{\mu}:\mathcal{J}\rightarrow\mathcal{X}$ is the bijective encoder. Finally, $\RM{X}$ is transmitted through the wireless channel \eqref{eq:channelmodel} to the \ac{sac} receivers.

At the communication receiver, we assume a coherent scheme where the channel matrix $\RM{H}_\zz{c}$ is known to the \ac{ue}. Upon receiving $\RM{Y}_\zz{c}$, the \ac{ue} estimates the transmitted message as $\hat{\rv{j}}=\V{\gamma}(\RM{Y}_\zz{c},\RM{H}_\zz{c})\in\mathcal{J}$, where $\V{\gamma}:\Complex^{N_\zz{c}\times T}\times\Complex^{N_\zz{c}\times M} \rightarrow\mathcal{J}$ denotes the communication decoding function.

At the sensing receiver, the latent parameter $\RV{\eta}$ is estimated using both the received signal $\RM{Y}_\zz{s}$ and the known transmitted codeword $\RM{X}$. This process yields the estimate $\hat{\RV{\eta}}=\omega(\RM{Y}_\zz{s},\RM{X})$, where $\omega:\Complex^{N_\zz{s}\times T}\times\mathcal{X}\rightarrow\Real^{K}$ defines the estimation function.

\begin{remark}
The sensing channel model \eqref{eq:channelmodel_s} has been used in the seminal work \cite{XioLiuCui:J23}, whereas we further assume that the communication channel \eqref{eq:channelmodel_c} falls under Rayleigh fading. 
As illustrated in Fig. \ref{fig:system}, such \ac{sac} channel models are of particular practical interest in complex urban scenarios such as low-altitude aerial networks, where target sensing is dominated by strong \ac{los} radar echoes while the communication link to the \ac{ue} suffers from severe scattered fading.
\end{remark}

\subsection{\ac{sac} Evaluation Metrics}
To evaluate sensing performance, we adopt the Miller-Chang-type \ac{bcrb}. Specifically, by defining the conditional \ac{bfim} as
\begin{align}
    \RM{J}_{\RV{\eta}|\RM{X}}&\triangleq\Expect_{\RM{Y}_\zz{s},\RV{\eta}|\RM{X}}\left[
    [\partial \ln p(\RM{Y}_\zz{s}|\RM{X},\RV{\eta})/{\partial \V{\eta}}]
    [\partial \ln p(\RM{Y}_\zz{s}|\RM{X},\RV{\eta})/{\partial \V{\eta}}]\trans
    \right]\nonumber
    \\
    &+\Expect_{\RV{\eta}}\left[
    [{\partial \ln p(\RV{\eta})}/{\partial \V{\eta}}]
    [{\partial \ln p(\RV{\eta})}/{\partial \V{\eta}}]\trans
    \right],\label{eq:bfim}
\end{align}
the corresponding Cram\'{e}r-Rao inequality is given by:
\begin{equation}
    \label{eq:MillerChang}
    \Expect\left(\|\hat{\RV{\eta}}-\RV{\eta}\|_2^2\right)\geq\Expect_{\RM{X}}\left[\text{tr}\left(\RM{J}_{\RV{\eta}|\RM{X}}^{-1}\right)\right]\triangleq e,
\end{equation}
where $e$ denotes the \ac{bcrb} for the \ac{mse} of the latent parameter $\RV{\eta}$.

For communication performance, there exist two fundamental metrics: the transmission rate $R\triangleq (\log J)/T$ and the decoding error probability $P_\zz{e}=\Prob(\hat{\rv{j}}\neq\rv{j})$. In an ergodic channel with infinite code length, $P_\zz{e}$ is irrelevant as Shannon capacity can be used. However, this obscures the critical \ac{mimo} diversity gains. Therefore, we evaluate the communication link through the rigorous lens of diversity and multiplexing gains \cite{ZheTse:J03}. Specifically, an \ac{isac} scheme achieves a multiplexing gain $r\geq 0$ and a diversity gain $d\geq 0$ if
\begin{equation}
    \lim_{\snrc\to\infty}R/\log \snrc=r, \quad -\lim_{\snrc\to\infty}\log P_\zz{e}/\log \snrc=d.
\end{equation}
In \ac{mimo} systems, the fundamental interplay between these two gains is characterized by the \ac{dmt}. Let $d^*(r)$ denote the optimal \ac{dmt} curve, which represents the supremum of achievable diversity gains for any target multiplexing gain $r$.

\begin{figure}[t]
    \centering
    \includegraphics[width=0.7\columnwidth]{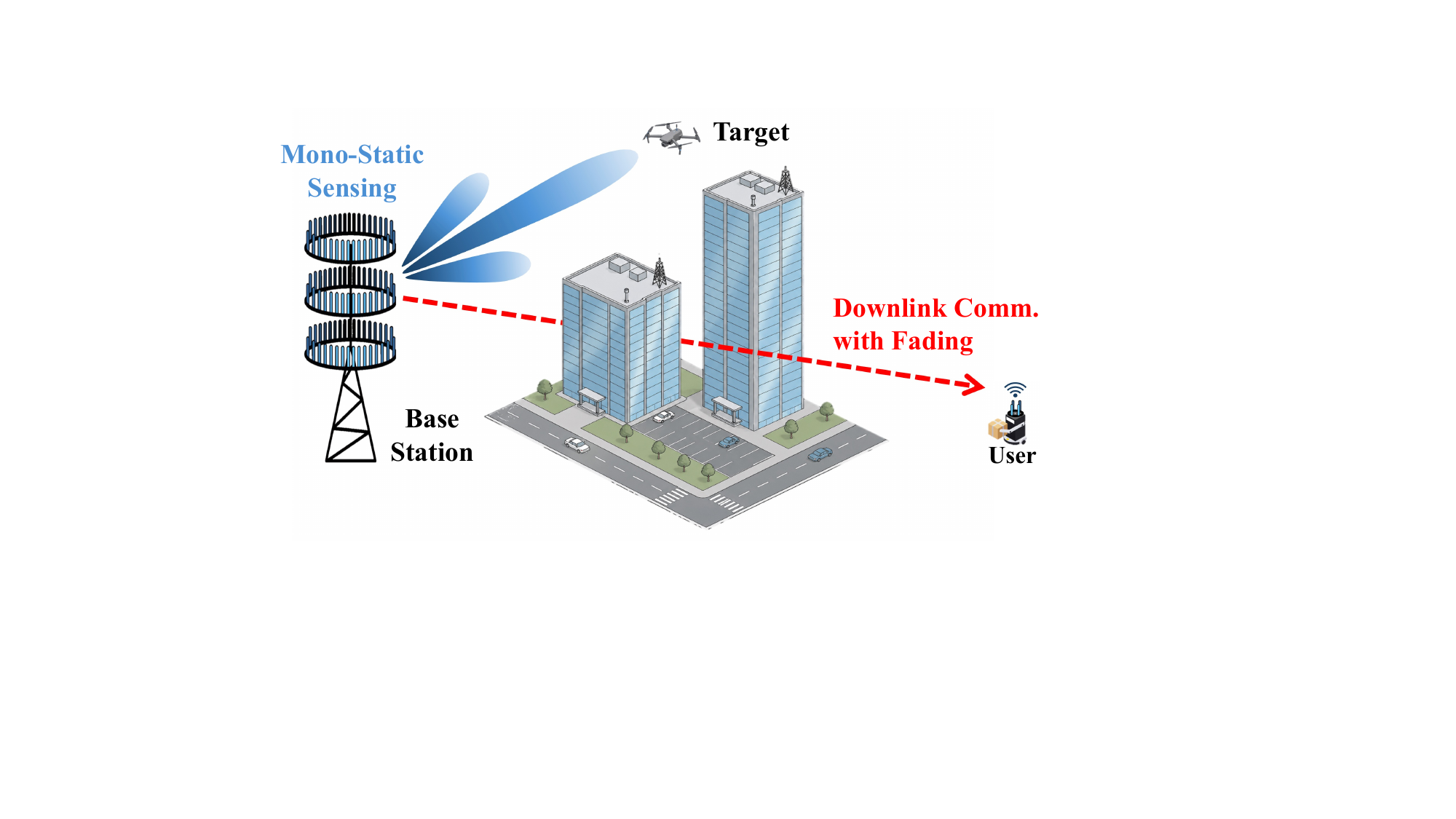}
    \vspace{-0.5em}
    \caption{Illustration of a typical mono-static \ac{mimo} \ac{isac} scenario.}
    \label{fig:system}
    \vspace{-1em}
\end{figure}

\subsection{\ac{sac} Optimal Waveform Design}
Next, we consider the dedicated optimal waveforms that strictly focus on either the sensing channel \eqref{eq:channelmodel_s} or the communication channel \eqref{eq:channelmodel_c}. For sensing, the optimal waveform aims to minimize the \ac{bcrb} $e$. As demonstrated in \cite{XioLiuCui:J23}, this metric is determined solely by the sample covariance matrix of the transmitted signal:
\begin{equation}
    \RM{R}_{\RM{X}}\triangleq T^{-1}\RM{X}\RM{X}\ermtrans.
\end{equation}
Under mild regularity conditions \cite[Prop. 3-5]{XioLiuCui:J23}, the optimal sensing covariance is deterministic, i.e., $\RM{R}_{\RM{X}}=\M{R}$. To facilitate our analysis, we always assume the existence of a unique optimal $\M{R}$. Consequently, any sensing-optimal waveform must reside within the set $\{\M{X}:\M{X}\M{X}\ermtrans=T\M{R}\}$.

Conversely, for the communication link, the optimal waveform achieves the unconstrained \ac{dmt} $d^*(r)$. Provided that $T\geq M+\nc-1$, this optimal tradeoff is achieved by isotropic Gaussian codewords, i.e., $\text{vec}(\RM{X})\sim \mathcal{CN}(\V{0},\M{I}_{MT})$. The resulting optimal \ac{dmt} curve $d^*(r)$ is the well-known piecewise-linear function \cite{ZheTse:J03} connecting the points $(k,d^*(k))$ for $k=0,...,\min\{M,\nc\}$, where
\begin{equation}\label{eq:originaldmt}
    d^*(k)=(M-k)(\nc-k).
\end{equation}
By substituting this communication-optimal Gaussian waveform into \eqref{eq:bfim} and \eqref{eq:MillerChang}, its corresponding sensing performance can be readily evaluated. 
However, characterizing the communication \ac{dmt} when strictly employing sensing-optimal codewords is highly non-trivial. In the remainder of this paper, we leverage Riemannian geometry to tackle this challenging problem, ultimately deriving a converse bound on the sensing-constrained \ac{dmt}, denoted as $d^*_{\M{R}}(r)$.

\section{\ac{dmt} with Optimal Sensing Waveform}
In this section, we establish the main result of this paper: deriving a converse bound on the sensing-constrained \ac{dmt} $d^*_{\M{R}}(r)$ via outage analysis under the sensing-optimal condition $\M{X}\M{X}\ermtrans=T\M{R}$. Fortunately, this constraint defines a smooth manifold in $\Complex^{M\times T}$, belonging to a broader family of generalized Stiefel manifolds formally defined below:
\begin{definition}\label{def:stiefel}
    For a Hermitian matrix $\M{0}\preceq\M{A}\in \Complex^{k\times k}$ and an integer $n>\zz{rk}(\M{A})$, a generalized Stiefel manifold is
    \begin{equation}
        S_{\M{A}}^{(k,n)}\triangleq\left\{\M{X}\in\Complex^{k\times n};\M{X}\M{X}\ermtrans=\M{A}\right\}
    \end{equation}
    which forms a smooth real manifold of dimension $\zz{rk}(\M{A})(2n-\zz{rk}(\M{A}))$. Notably, the standard complex Stiefel manifold \cite{Chi:B03} is recovered as the special case $S_{\M{I}_k}^{(k,n)}$.
\end{definition}

For subsequent analysis, we also recall the unitary Lie group of degree $n$, given by 
$
    U(n)\triangleq\{\M{U}\in\Complex^{n\times n} \mid \M{U}\M{U}\ermtrans=\M{I}_{n}\}=S_{\M{I}_n}^{(n,n)}.
$
Crucially, $U(n)$ exerts a smooth and transitive group action on $S_{\M{A}}^{(k,n)}$ via right matrix multiplication.

In this work, we endow the generalized Stiefel manifolds with the standard Euclidean metric. That is, for any tangent vectors $\M{\Delta}_1,\M{\Delta}_2$ in the tangent space $\zz{T}_{\M{S}}S_{\M{A}}^{(k,n)}$ of $\M{S}\in S_{\M{A}}^{(k,n)}$, the metric tensor is given by $g(\M{\Delta}_1,\M{\Delta}_2)\triangleq\Re\{\text{tr}(\M{\Delta}_1 \M{\Delta}_2\ermtrans)\}$. The corresponding Euclidean volume measure on $S_{\M{A}}^{(k,n)}$ is denoted as $\zz{Vol}(\cdot)$.

Generalized Stiefel manifolds possess several pivotal geometric properties that underpin our analysis:
\begin{proposition}[Properties of $S_{\M{A}}^{(k,n)}$]\label{prop:stielfelproperty}
    \begin{enumerate}[\label=(\alph*)]
        \item $S_{\M{A}}^{(k,n)}$ is a $U(n)$-homogeneous space. The metric $g(\cdot,\cdot)$ and measure $\zz{Vol}(\cdot)$ are $U(n)$-invariant. The uniform probability measure $P_\zz{H}(\cdot)\triangleq \zz{Vol}(\cdot)/\zz{Vol}(S_{\M{A}}^{(k,n)})$ is the unique $U(n)$-invariant probability measure (Haar measure).\label{prop:homogeneous}
        \item Let random matrix $\RM{F}$ distributed according to $P_\zz{H}$ on $S_{\M{A}}^{(k,n)}$. For any $\M{K}\in \Complex^{m\times k}$, $\M{K}\RM{F}\sim P_\zz{H}$ on $S_{\M{K}\M{A}\M{K}\ermtrans}^{(m,n)}$.\label{prop:invariancestay}
    \end{enumerate}
\end{proposition}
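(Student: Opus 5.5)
The plan is to reduce everything to the action of $U(n)$ by right multiplication, which is well defined since $(\M{X}\M{U})(\M{X}\M{U})\ermtrans=\M{X}\M{X}\ermtrans=\M{A}$. For part (a), we first show that this action is transitive. Take $\M{X},\M{X}'\in S_{\M{A}}^{(k,n)}$ and let $a=\zz{rk}(\M{A})$. Write the eigendecomposition $\M{A}=\M{V}\M{\Lambda}\M{V}\ermtrans$ with $\M{\Lambda}=\text{diag}(\lambda_1,\dots,\lambda_a,0,\dots,0)$. Set $\M{W}=\M{\Lambda}_a^{-1/2}\M{V}_a\ermtrans\M{X}\in\Complex^{a\times n}$, where $\M{V}_a$ collects the first $a$ eigenvectors. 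Then $\M{W}$ has orthonormal rows. Moreover, $\M{V}_{\perp}\ermtrans\M{X}=\M{0}$, because the rows of $\M{V}_\perp\ermtrans\M{X}$ have zero norm. Hence $\M{X}=\M{V}_a\M{\Lambda}_a^{1/2}\M{W}$. The same construction applied to $\M{X}'$ gives $\M{W}'$. Since $a<n$, we can complete $\M{W}$ and $\M{W}'$ to unitaries $\M{Q},\M{Q}'\in U(n)$ whose first $a$ rows are $\M{W}$ and $\M{W}'$. The matrix $\M{U}=\M{Q}\ermtrans\M{Q}'$ then satisfies $\M{W}\M{U}=\M{W}'$, and therefore $\M{X}\M{U}=\M{X}'$. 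The stabilizer of $\M{X}_0=\M{V}_a\M{\Lambda}_a^{1/2}[\M{I}_a\;\M{0}]$ is $\{\text{diag}(\M{I}_a,\M{U}_2)\}\cong U(n-a)$. Consequently $S_{\M{A}}^{(k,n)}\cong U(n)/U(n-a)$, a compact homogeneous space of real dimension $n^2-(n-a)^2=a(2n-a)$, consistent with Definition~\ref{def:stiefel}.

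Invariance of the metric follows because right multiplication $R_{\M{U}}:\M{X}\mapsto\M{X}\M{U}$ is the restriction of a linear isometry of $\Complex^{k\times n}$ equipped with $\Re\,\text{tr}(\cdot\,\cdot\ermtrans)$. Its differential maps $\M{\Delta}$ to $\M{\Delta}\M{U}$, and
\[
\Re\,\text{tr}(\M{\Delta}_1\M{U}\M{U}\ermtrans\M{\Delta}_2\ermtrans)=g(\M{\Delta}_1,\M{\Delta}_2).
\]
The Riemannian volume is determined by the metric, so $\zz{Vol}$ is also invariant. Since the manifold is compact, $\zz{Vol}(S_{\M{A}}^{(k,n)})<\infty$ and $P_\zz{H}$ is a well-defined invariant probability measure. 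For uniqueness we invoke the standard fact that on a homogeneous space $G/H$ with $G$ compact there is a unique $G$-invariant probability measure. A direct argument goes as follows. If $\nu$ is invariant, consider the pushforward of Haar measure on $U(n)$ under the orbit map $\M{U}\mapsto\M{X}_0\M{U}$. For any continuous $f$, Fubini together with transitivity gives
\[
\int f\,d\nu=\int\!\!\int f(\M{X}\M{U})\,d\M{U}\,d\nu(\M{X})=\int f(\M{X}_0\M{U})\,d\M{U}.
\]
The right-hand side does not depend on $\nu$, so $\nu$ is unique.

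For part (b), first note that $\M{K}\RM{F}\in S_{\M{K}\M{A}\M{K}\ermtrans}^{(m,n)}$ deterministically. We also need $n>\zz{rk}(\M{K}\M{A}\M{K}\ermtrans)$, which holds since this rank is at most $a$. The law of $\M{K}\RM{F}$ is invariant under $U(n)$, because $\M{K}\RM{F}\M{U}=\M{K}(\RM{F}\M{U})$ and $\RM{F}\M{U}\overset{d}{=}\RM{F}$ by part (a). By the uniqueness in part (a), now applied on the target manifold, this law must equal $P_\zz{H}$ there.

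The main technical obstacle is smoothness and the dimension count when $\M{A}$ is rank-deficient. We handle it by identifying the set with the orbit $U(n)/U(n-a)$, where the closed-subgroup theorem gives the smooth structure. The Euclidean-metric claim additionally requires this to be an embedded submanifold of $\Complex^{k\times n}$. That holds because $S_{\M{A}}^{(k,n)}$ is the image of the compact set $S_{\M{I}_a}^{(a,n)}$ under the injective linear map $\M{W}\mapsto\M{V}_a\M{\Lambda}_a^{1/2}\M{W}$, and the standard Stiefel manifold is a regular level set.
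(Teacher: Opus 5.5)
Your proposal is correct and follows the same route as the paper: $U(n)$-invariance of the metric and volume under right multiplication, uniqueness of the invariant probability measure on a compact homogeneous space, and, for (b), $U(n)$-invariance of the law of $\M{K}\RM{F}$ via $\M{K}\RM{F}\M{U}=\M{K}(\RM{F}\M{U})$ followed by that uniqueness. You also supply details the paper only asserts or cites: an explicit transitivity argument, a direct Haar-averaging proof of uniqueness, and the check that $\zz{rk}(\M{K}\M{A}\M{K}\ermtrans)<n$, so the target manifold is admissible.
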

\begin{proof}
See Appendix A.
\end{proof}

With the geometric perspective, to transmit sensing-optimal codewords is to transmit on a generalized Stiefel manifold and, given a channel realization $\M{H}_\zz{c}$, the noiseless received communication signal also lies on such a manifold, i.e.
\begin{equation}\label{eq:codewordconstraint}
    \RM{X}\in S_{T\M{R}}^{(M,T)},\quad \sqrt{{\snrc}/{M}}\M{H}_\zz{c}\RM{X}\in S_{{\snrc T}\M{\M{H}_\zz{c}R\M{H}_\zz{c}\ermtrans}/M}^{(N_\zz{c},T)}.
\end{equation}
To derive a converse bound on the \ac{dmt} under optimal sensing constraints, we follow the methodology established in the seminal work \cite{ZheTse:J03} by first characterizing the outage probability in the high-\ac{snr} regime.


\subsection{Outage Characterization}
In this subsection, we analyze the communication outage for the quasi-static fading channel
\begin{equation}
    \label{eq:compoundch}
    \RM{Y}_{\zz{c},t}=\sqrt{\snrc/M}\RM{H}_\zz{c}\RM{X}_t+\RM{Z}_\zz{c}, \quad t=1,2,\dots
\end{equation}
where infinite blocks are transmitted via the same fading channel $\RM{H}_\zz{c}$, and each block is constrained on the sensing-optimal manifold, i.e., $\RM{X}_t\in S_{T\M{R}}^{(M,T)}$. Omitting subscript $t$ for brevity, the outage probability at target rate $R$ is:
\begin{equation}\label{eq:outageformulation}
    P_\zz{out}(R)=\min_{P_{\RM{X}};\RM{X}\indep\RM{H}_\zz{c}}\left\{\mathbb{P}_{\RM{H}_\zz{c}}\left(I_{\RM{H}_\zz{c}}(\RM{X};\RM{Y}_\zz{c})<TR\right)\right\}.
\end{equation}
where $\indep$ denotes the independence of random variables. $I_{\M{H}_\zz{c}}(\RM{X};\RM{Y}_\zz{c})$ signifies the mutual information conditioned on a fixed realization $\M{H}_\zz{c}$, whereas $I_{\RM{H}_\zz{c}}(\RM{X};\RM{Y}_\zz{c})$ represents the associated random variable\footnote{This notational convention for distinguishing between specific realizations and their corresponding random variables is maintained throughout the remainder of this paper without further explicit mention.}. 

To calculate \eqref{eq:outageformulation}, we have
\begin{equation}\label{eq:mutualinformation}
\begin{split}
    &I_{\M{H}_\zz{c}}(\RM{X};\RM{Y}_\zz{c})
    =h_{\M{H}_\zz{c}}(\RM{Y}_\zz{c})-h_{\M{H}_\zz{c}}(\RM{Y}_\zz{c}|\RM{X})\\
    &=h\left(\sqrt{{\snrc}/{M}}\M{H}_\zz{c}\RM{X}+\RM{Z}_\zz{c}\right)-N_\zz{c}T\log (\pi e).
\end{split}
\end{equation}
Note that $\sqrt{{\snrc}/{M}}\M{H}_\zz{c}\RM{X}\in S_{{\snrc T}\M{\M{H}_\zz{c}R\M{H}_\zz{c}\ermtrans}/M}^{(N_\zz{c},T)}$ by \eqref{eq:codewordconstraint}. 

Using Proposition \ref{prop:stielfelproperty}, since $S_{{\snrc T}\M{\M{H}_\zz{c}R\M{H}_\zz{c}\ermtrans}/M}^{(N_\zz{c},T)}$ is homogeneous, uniform distribution $P_\zz{H}$ should maximize $I_{\M{H}_\zz{c}}(\RM{X};\RM{Y}_\zz{c})$, which can be achieved by choosing $\RM{X}\sim P_\zz{H}$ regardless of $\M{H}_\zz{c}$. Following this intuition, we arrive at the following proposition.
\begin{proposition}\label{prop:maxmutual}
    The maximum mutual information \eqref{eq:mutualinformation} and the outage probability \eqref{eq:outageformulation} are obtained by $P_\zz{H}$, i.e., 
    $P_\zz{H}=\underset{P_{\RM{X}}}{\zz{argmax}}\left[I_{\M{H}_\zz{c}}(\RM{X};\RM{Y}_\zz{c})\right]=\underset{P_{\RM{X}};\RM{X}\indep\RM{H}_\zz{c}}{\zz{argmin}}\left\{\mathbb{P}_{\RM{H}_\zz{c}}\left(I_{\RM{H}_\zz{c}}(\RM{X};\RM{Y}_\zz{c})<TR\right)\right\}$.\footnote{This result is also latently used in \cite{XioLiuCui:J23} as a conjecture. }   
\end{proposition}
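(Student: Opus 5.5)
The plan is a symmetrization (Haar averaging) argument combined with concavity of mutual information in the input law. Fix a realization $\M{H}_\zz{c}$ and let $P_{\RM{X}}$ be any distribution supported on $S_{T\M{R}}^{(M,T)}$.

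\emph{Step 1: unitary rotations do not change the mutual information.} For $\M{U}\in U(T)$, define the rotated law $P_{\RM{X}}^{\M{U}}$ as the law of $\RM{X}\M{U}$. This law is again supported on $S_{T\M{R}}^{(M,T)}$, since $\M{X}\M{U}\M{U}\ermtrans\M{X}\ermtrans=\M{X}\M{X}\ermtrans$. Because $\RM{Z}_\zz{c}$ has i.i.d. $\mathcal{CN}(0,1)$ entries, $\RM{Z}_\zz{c}\M{U}$ has the same law as $\RM{Z}_\zz{c}$. Hence the output under input $\RM{X}\M{U}$ is distributed as $\RM{Y}_\zz{c}\M{U}$. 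Right multiplication by a unitary preserves differential entropy, so by \eqref{eq:mutualinformation} we get $I_{\M{H}_\zz{c}}(\RM{X}\M{U};\RM{Y}_\zz{c})=I_{\M{H}_\zz{c}}(\RM{X};\RM{Y}_\zz{c})$.

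\emph{Step 2: averaging over $U(T)$ can only increase the mutual information.} Let $\RM{U}$ be Haar distributed on $U(T)$, independent of $\RM{X}$, and consider the mixture $\bar P=\Expect_{\RM{U}}[P_{\RM{X}}^{\RM{U}}]$. Mutual information is concave in the input distribution for a fixed channel, so Jensen's inequality together with Step 1 gives $I(\bar P)\ge \Expect_{\RM{U}} I(P^{\RM{U}}_{\RM{X}})=I(P_{\RM{X}})$.

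\emph{Step 3: the averaged law is the Haar measure.} The law of $\RM{X}\RM{U}$ is $U(T)$-invariant by construction. By Proposition \ref{prop:stielfelproperty}\ref{prop:homogeneous}, $P_\zz{H}$ is the unique $U(T)$-invariant probability measure on $S_{T\M{R}}^{(M,T)}$, so $\bar P=P_\zz{H}$. Therefore $P_\zz{H}$ maximizes $I_{\M{H}_\zz{c}}(\RM{X};\RM{Y}_\zz{c})$ for every realization $\M{H}_\zz{c}$ simultaneously.

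\emph{Step 4: the outage statement.} The minimization in \eqref{eq:outageformulation} is over $P_{\RM{X}}$ independent of $\RM{H}_\zz{c}$. For any such $P_{\RM{X}}$ and every $\M{H}_\zz{c}$, Step 3 gives
\[
\{I_{\M{H}_\zz{c}}(P_\zz{H})<TR\}\subseteq\{I_{\M{H}_\zz{c}}(P_{\RM{X}})<TR\}
\]
pointwise in $\M{H}_\zz{c}$. Taking $\Prob_{\RM{H}_\zz{c}}$ of both sides shows that $P_\zz{H}$ attains the minimum outage probability. Proposition \ref{prop:stielfelproperty}\ref{prop:invariancestay} adds the observation that the induced noiseless output $\sqrt{\snrc/M}\M{H}_\zz{c}\RM{X}$ is then Haar distributed on the output manifold, consistent with the intuition stated before the proposition. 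The argument itself does not need this.

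\emph{Main obstacle: rigor in Steps 2 and 3.} Two points need care.
\begin{itemize}
\item Concavity and Jensen must hold for mixtures over a continuum. Mutual information is bounded here because the power is fixed, so I would write $I=h(\RM{Y})-\text{const}$ and use concavity of differential entropy in the output density. The output density is a linear image of the input law, so the inequality passes under the integral over $U(T)$ by Fubini.
\item The identification $\bar P=P_\zz{H}$ needs the uniqueness of the invariant measure. This follows from transitivity of the action (Proposition \ref{prop:stielfelproperty}) and compactness of $U(T)$, which is standard for homogeneous spaces.
\end{itemize}
I also note that the claim is "a" maximizer: uniqueness of the argmax would require strict concavity, which I would not attempt.
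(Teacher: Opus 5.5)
Your proposal is correct and follows the same route as the paper's proof in Appendix B. Both use invariance of the output entropy under right multiplication by $\M{U}\in U(T)$ (using $\RM{Z}_\zz{c}\M{U}\sim\RM{Z}_\zz{c}$), then Jensen/concavity over Haar-distributed $\M{U}$, then identification of the averaged input law with $P_\zz{H}$ via uniqueness of the invariant measure. Where you differ, you improve on the paper: your reading of the concavity step as a mixture of input laws is the correct interpretation of the paper's literal ``entropy of the integral'' expression, and your Step 4 supplies the pointwise-inclusion argument for the outage claim that the paper leaves implicit.
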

\begin{proof}
See Appendix B.
\end{proof}

For a smooth Euclidean submanifold, \cite{Wei:T17} finds that at high SNR, the optimal distribution is asymptotically uniform. In our case, we proved $P_\zz{H}$ achieves the maximal entropy regardless of SNR level. Therefore, we always set $\RM{X}\sim P_\zz{H}$ on $S_{T\M{R}}^{(M,T)}$. 

Next, we apply the \ac{svd} to decompose $\M{H}_\zz{c}\RM{X}$, effectively transforming the Rx manifold into a canonical form. For brevity, we consider the case where $\rank{R} \leq \nc$ and $T \geq \nc$.

By compact \ac{svd}, we have the decomposition $\M{H}_\zz{c}\RM{X}/\sqrt{T}=\M{V}\M{\Sigma}\RM{W}$, with $\M{V}\in U(\nc)$, $\M{\Sigma}\in \Real^{\nc\times\nc}$ and $\RM{W}\in S_{\M{I}_{\nc}}^{(\nc,T)}$. Let $\M{\Sigma}=\zz{diag}(\sigma_1,...,\sigma_{\rank{R}},\V{0})\triangleq \zz{diag}(\V{\sigma}\trans,\V{0})$ and $\sigma_1\geq...\geq\sigma_{\rank{R}}\geq0$. $\M{V}, \M{\Sigma}$ are functions of $\hcnb$ by eigendecomposition $\hcnb\M{R}\hcnb\ermtrans=\M{V}\M{\Sigma}\M{V}\ermtrans$. Since $|\det\M{V}|=1$, we have
\begin{subequations}
\begin{equation}
        h\left(\sqrt{{\snrc}/{M}}\M{H}_\zz{c}\RM{X}+\RM{Z}_\zz{c}\right)
        =h\left({\sqrt{{\snrc T }/{M}}\M{\Sigma}\RM{W}}+\RM{Z}_\zz{c}\right)\label{eq:rxentropy}
\end{equation}
\begin{equation}\label{eq:XequalsC}
    I_{\hcnb}({\RM{X}};\RM{Y}_\zz{c})=I_{\V{\sigma}}(\RM{C}+\RM{Z}_\zz{c};\RM{W})
\end{equation}    
\end{subequations}
where $\RM{C} \triangleq \sqrt{\snrc T / M}\M{\Sigma}\RM{W}$. By Proposition \ref{prop:stielfelproperty}-b, the uniform distribution of $\RM{X}$ ensures that $\RM{W}$ and $\RM{C}$ are uniformly distributed on $S_{\M{I}_{\nc}}^{(\nc,T)}$ and $S_{\snrc T\M{\Sigma}^2/M}^{(\nc,T)}$, respectively. Consequently, the mutual information and the outage probability \eqref{eq:outageformulation} are entirely characterized by the geometry of the manifold $S_{\snrc T\M{\Sigma}^2/M}^{(\nc,T)}$, whose ``shape'' depends on the SNR $\snrc$ and the singular values $\V{\sigma}$.

To facilitate geometric intuition, we define the log-singular value $\alpha_i\triangleq -\log \sigma_i/\log \snrc$, and use $\V{\alpha}\triangleq(\alpha_1,...,\alpha_{\rank{R}})$, $ \alpha_1\leq...\leq\alpha_{\rank{R}}$. By the definition of $\RM{C}$, we have
\begin{equation}\label{eq:CW}
    \RM{C}=\sqrt{{T}/{M}}
    \zz{diag}\left(\snrc^{0.5-\alpha_1},...,\snrc^{0.5-\alpha_{\rank{R}}}, 0,...,0\right)
    \RM{W}.
\end{equation}
Since $\RM{W}$ is uniform on the standard Stiefel manifold $S_{\M{I}_{\nc}}^{(\nc,T)}$, \eqref{eq:CW} characterizes an ``inflation'' process in $\Complex^{\nc\times T}$. Specifically, while the rising \ac{snr} drives a polynomial expansion of the generalized Stiefel manifold with an exponential order of $0.5$, the channel $\M{H}_\zz{c}$ moderates this growth by reducing the expansion order by $\alpha_i$. When the composite channel $\M{H}_\zz{c}\RM{X}$ exhibits atypically small singular values (i.e., high $\alpha_i$), the manifold $S_{\snrc T\M{\Sigma}^2/M}^{(\nc,T)}$ wouldn't be inflated enough to support reliable communication, that is when the outage event appears, i.e., the channel realization $\hcnb$ is in ``deep fade'' \cite{ZheTse:J03}.

By \eqref{eq:XequalsC}, the outage probability \eqref{eq:outageformulation} can be reformulated as
\begin{equation}\label{eq:outagereformulation}
    P_\zz{out}(R)=\mathbb{P}_{\RV{\alpha}}\left(I_{\RV{\alpha}}(\RM{C}+\RM{Z}_\zz{c};\RM{W})<TR\right).
\end{equation}
To derive an asymptotic result for \eqref{eq:outagereformulation}, we first characterize $I_{\V{\alpha}}(\RM{C}+\RM{Z}_\zz{c};\RM{W})$ with a fixed $\V{\alpha}$ in the high-\ac{snr} regime.

\subsection{Asymptotic Mutual Information}
In this subsection, we characterize the asymptotic mutual information for a fixed realization $\V{\alpha}$ as $\snrc \to \infty$. Per \eqref{eq:CW}, the manifold expansion rate is governed by the exponents $\{0.5-\alpha_i\}$. For dimensions where $\alpha_i \geq 0.5$, the manifold fails to expand w.r.t. $\snrc$, implying that effective communication is predominantly supported by the subset of dimensions satisfying $\alpha_i < 0.5$. 
We define
$
    m \triangleq \left|\{i \mid \alpha_i < 0.5\}\right|
$
as the number of such dominant dimensions. Since $m=0$ results in a trivially bounded mutual information, we focus on $m > 0$ throughout this section, where $\alpha_m$ denotes the largest log-singular value satisfying $\alpha_i < 0.5$.
Our geometric intuition is formalized in the following proposition.
\begin{proposition}\label{prop:onehalf}
    Given $\V{\alpha}$, define $\RM{W}_1\sim P_\zz{H}$ on $S_{\M{I}_m}^{(m,T)}$ and let $\RM{Z}_\zz{c1}\in\Complex^{m\times T}$ be a noise matrix with i.i.d. $\mathcal{CN}(0,1)$ entries. The following asymptotic equivalence holds:
    \begin{equation}
        \lim_{\snrc\to\infty}\frac{I_{\V{\alpha}}\left(\sqrt{{T}/{M}}
        \zz{diag}\left(\snrc^{0.5-\V{\alpha}_{1:m}}\right)\RM{W}_1+\RM{Z}_\zz{c1};\RM{W}_1\right)}{I_{\V{\alpha}}(\RM{C}+\RM{Z}_\zz{c};\RM{W})}=1.
    \end{equation}
\end{proposition}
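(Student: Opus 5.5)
The plan is to sandwich $I_{\V{\alpha}}(\RM{C}+\RM{Z}_\zz{c};\RM{W})$ between $I_{\V{\alpha}}(\RM{C}_1+\RM{Z}_\zz{c1};\RM{W}_1)$ and that same quantity plus a term bounded in $\snrc$. Then I would show that the common leading term diverges like $\log\snrc$, so the bounded gap vanishes in the ratio.

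Partition the rows of $\RM{C}+\RM{Z}_\zz{c}$ into the first $m$ rows $\RM{Y}_1=\RM{C}_1+\RM{Z}_\zz{c1}$ and the remaining $\nc-m$ rows $\RM{Y}_2=\RM{C}_2+\RM{Z}_\zz{c2}$. Here $\RM{C}_1=\sqrt{T/M}\,\zz{diag}(\snrc^{0.5-\V{\alpha}_{1:m}})\RM{W}_1$, and $\RM{W}_1$ consists of the first $m$ rows of $\RM{W}$. By Proposition \ref{prop:stielfelproperty}-b with $\M{K}=[\M{I}_m,\M{0}]$, $\RM{W}_1\sim P_\zz{H}$ on $S_{\M{I}_m}^{(m,T)}$, so this is exactly the numerator of the statement.

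For the lower bound, data processing (discarding $\RM{Y}_2$) gives $I(\RM{Y}_1,\RM{Y}_2;\RM{W})\ge I(\RM{Y}_1;\RM{W})$. Moreover $I(\RM{Y}_1;\RM{W})=I(\RM{Y}_1;\RM{W}_1)$, because $\RM{Y}_1$ depends on $\RM{W}$ only through $\RM{W}_1$ ($\RM{W}\to\RM{W}_1\to\RM{Y}_1$ is Markov).

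For the upper bound, the chain rule gives $I(\RM{Y}_1,\RM{Y}_2;\RM{W})=I(\RM{Y}_1;\RM{W}_1)+I(\RM{Y}_2;\RM{W}\mid\RM{Y}_1)$. I would bound the second term by $I(\RM{Y}_2;\RM{W}\mid \RM{Y}_1)\le h(\RM{Y}_2)-h(\RM{Z}_\zz{c2})$. The inequality holds because conditioning reduces entropy and $\RM{Y}_2-\RM{C}_2=\RM{Z}_\zz{c2}$ is independent of $(\RM{W},\RM{Y}_1)$.

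Each row of $\RM{W}$ has unit norm, so the row $i>m$ of $\RM{C}_2$ has total energy $(T/M)\snrc^{1-2\alpha_i}$ (zero for rows beyond $\rank{R}$). This is at most $T/M$ because $\alpha_i\ge 0.5$. The Gaussian maximum-entropy bound then gives $h(\RM{Y}_2)-h(\RM{Z}_\zz{c2})\le(\nc-m)T\log(1+1/M)$, a constant independent of $\snrc$. The only care needed is the boundary case $\alpha_i=0.5$, which is still bounded.

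The main obstacle is showing the numerator actually diverges, i.e.\ $I(\RM{Y}_1;\RM{W}_1)\ge c\log\snrc-O(1)$ for some $c>0$ whenever $m>0$. Without this, an $O(1)$ gap would not be negligible in the ratio.

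I would first drop to the single row $i=1$ using data processing again. The first row $\V{w}_1$ of $\RM{W}_1$ is uniform on the unit sphere of $\Complex^T$, and it is scaled by amplitude $a=\sqrt{T/M}\,\snrc^{0.5-\alpha_1}\to\infty$ since $\alpha_1\le\alpha_m<0.5$. A clean route is $I\ge h(a\V{w}_1+\V{z})-h(\V{z})$. I would lower-bound $h(a\V{w}_1+\V{z})$ via a radial/angular decomposition, noting that the angular component carries entropy of order $(2T-1)\log a$.

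If that computation is messy, the fallback is a Fano argument. Take a packing of the sphere with $N\approx a^{2T-1}/\text{const}$ points at mutual distance $\gg 1/a$. Partition the sphere into the corresponding cells, and decode the cell of $\V{w}_1$ from $a\V{w}_1+\V{z}$ by nearest neighbour. This yields an error probability bounded away from $1$, so $I\ge(1-P_e)\log N-1$, which grows like $(0.5-\alpha_1)(2T-1)\log\snrc$.

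Combining this with the sandwich, the ratio is squeezed between $1$ and $1+O(1)/\Theta(\log\snrc)\to1$.
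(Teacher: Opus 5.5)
Your proposal is correct and follows the paper's own argument: the same split into the first $m$ rows and the remaining rows, and the same lower bound, which the paper gets from the chain rule with $I(\RM{C}_1+\RM{Z}_\zz{c1};\RM{W}_2\mid\RM{W}_1)=0$ (your Markov/data-processing step). The upper bound is also the paper's, namely $h(\RM{C}_2+\RM{Z}_\zz{c2})-h(\RM{Z}_\zz{c2})$, which stays bounded because the rows with $\alpha_i\ge 0.5$ have bounded power, and where you go beyond the paper is in making both limiting facts explicit: the constant $(\nc-m)T\log(1+1/M)$ from the Gaussian maximum-entropy bound, and the divergence of the numerator, which the paper simply asserts from $\V{\alpha}_{1:m}<0.5$ and which your single-row reduction plus packing/Fano argument correctly supplies (note that $H(Q)\ge\log N-O(1)$ holds because the Voronoi cells of a maximal packing have probabilities within a constant factor of $1/N$).
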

\begin{proof}
See Appendix C.
\end{proof}

\begin{figure*}[th]
\begin{equation}\label{eq:cdefinition}
    c_{\snrc,\V{\alpha}}\triangleq\max\left\{
    \max_{\M{S}\in S_{\snrc,\V{\alpha}}, \Delta\in \zz{T}_{\M{S}}S_{\snrc,\V{\alpha}},g(\Delta,\Delta)=1} \left|\mathbf{II}_{\M{S}}(\Delta,\Delta)\right|,
    \rho_{\perp}\left(S_{\snrc,\V{\alpha}}\right)^{-1},
    \rho_{\top}\left(S_{\snrc,\V{\alpha}}\right)^{-1}
    \right\}
\end{equation}
\begin{equation}\label{eq:boundedentropy}
    \left|h\left(\RM{C}_1+\RM{Z}_\zz{c1}\right)-{m^2}\log(\pi e)/2-\log \zz{Vol}\left(S_{\snrc,\V{\alpha}}\right)\right|\leq \zz{const}(m,T)\delta^{-1}(1+c_{\snrc,\V{\alpha}}\snrc^{0.5})^{\delta}c_{\snrc,\V{\alpha}}^2\log^2(c_{\snrc,\V{\alpha}})
\end{equation}
\rule{\linewidth}{0.5pt}
\end{figure*}

Proposition \ref{prop:onehalf} implies that the asymptotic mutual information is determined solely by the log-singular values $\V{\alpha}_{1:m}$, corresponding to the first $m$ rows of $\RM{C}$ that expand at a positive rate according to \eqref{eq:CW}. Define the reduced-dimension noiseless component as
\begin{equation}
     \RM{C}_1\triangleq \sqrt{T/M} \zz{diag}(\snrc^{0.5-\V{\alpha}_{1:m}}) \RM{W}_1
\end{equation}
which is uniformly distributed on the manifold $S_{\snrc,\V{\alpha}}\triangleq S_{{\frac{T}{M}}{\zz{diag}\left(\snrc^{1-2\V{\alpha}_{1:m}}\right)}}^{(m,T)}$ by Proposition \ref{prop:stielfelproperty}-b. In the high-\ac{snr} regime, the outage probability \eqref{eq:outagereformulation} satisfies
\begin{equation}\label{eq:outageasymp}
    P_\zz{out}(R) 
    \doteq \mathbb{P}_{\RV{\alpha}}\left(h_{\RV{\alpha}}\left(\RM{C}_1+\RM{Z}_\zz{c1}\right) < T(R+m\log(\pi e))\right).
\end{equation}
As shown in \eqref{eq:outageasymp}, the asymptotic outage is governed by the differential entropy of a uniform distribution over the inflating manifold $S_{\snrc,\V{\alpha}}$ perturbed by normalized additive noise $\RM{Z}_\zz{c1}$.

Previous literature, such as \cite{XioLiuCui:J23}, utilized Weyl's tube formula \cite{Wey:J39} to approximate similar but simpler entropy terms. Specifically, \cite[Thm. 1]{XioLiuCui:J23} characterized the maximal differential entropy $h_{\hcnb}(\RM{C}_\zz{x}+\RM{Z}_\zz{cx})$ for $\RM{C}_\zz{x}\in S_{\snrc T\hcnb\M{R}\hcnb\ermtrans/M}^{(\nc,T)}$ as $\snrc\to\infty$. This was achieved by approximating the noise distribution normal to the manifold and assuming uniformity on the surface of every $\epsilon$-tube. However, since $S_{\snrc,\V{\alpha}}$ exhibits row-dependent expansion rates $\snrc^{0.5-\alpha_i}$, such approximations are inadequate for our rigorous analysis. We thus leverage the formal results in \cite{Wei:T17}, which characterize the entropy of manifold-constrained variables via tubular neighborhood theory. This leads to the following lemma.

\begin{lemma}\label{lemma:differentialentropy}
Define $c_{\snrc,\V{\alpha}}$ as in \eqref{eq:cdefinition}, where $\mathbf{II}_{\M{S}}(\cdot,\cdot)$ is the second fundamental form under Euclidean embedding, $\rho_{\perp}(\cdot)$ denotes the manifold's maximal uniform tubular neighborhood radius whereas $\rho_{\top}(\cdot)$ is the injectivity radius. For any $\delta\in(0,1]$, inequality \eqref{eq:boundedentropy} holds.
\end{lemma}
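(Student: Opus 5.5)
The plan is to obtain Lemma \ref{lemma:differentialentropy} as a direct specialization of the manifold-entropy bound in \cite{Wei:T17}. That result states the following: for a compact smooth embedded submanifold $\mathcal{M}\subset\Real^{D}$ of dimension $d$, a point $\RM{S}$ uniform on $\mathcal{M}$, and standard Gaussian noise, the entropy $h(\RM{S}+\RM{Z})$ differs from $\frac{D-d}{2}\log(\pi e)+\log\zz{Vol}(\mathcal{M})$ by an explicit quantity. This quantity depends only on $(d,D)$, an exponent $\delta$, and a curvature/reach constant that bounds the second fundamental form, the inverse reach, and the inverse injectivity radius. We first identify $\Complex^{m\times T}\cong\Real^{2mT}$ with the metric $g$. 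We then check that $S_{\snrc,\V{\alpha}}$ is a compact embedded submanifold of real dimension $m(2T-m)$, using Definition \ref{def:stiefel} and the fact that $\zz{rk}=m<T$. The codimension is therefore $2mT-m(2T-m)=m^2$, which gives the term $m^2\log(\pi e)/2$ under the complex normalization of $\mathcal{CN}(0,1)$ noise (variance $1/2$ per real coordinate). Finally, by Proposition \ref{prop:stielfelproperty}, the law of $\RM{C}_1$ is the normalized volume measure, so the ``uniform on manifold'' hypothesis of \cite{Wei:T17} holds exactly.

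The argument rests on the tubular-neighborhood decomposition, so we recall it briefly to see where each term in \eqref{eq:boundedentropy} comes from. Inside the tube of radius $\rho<\rho_\perp$, the map $(\M{S},\V{\nu})\mapsto \M{S}+\V{\nu}$ with $\V{\nu}$ normal is a diffeomorphism. Its Jacobian is $\det(\M{I}-\mathbf{II}_{\M{S}}(\cdot)^\top\V{\nu})$, which is $1+O(c\|\V{\nu}\|)$. Hence the density of $\RM{C}_1+\RM{Z}_{\zz{c}1}$ at a tube point equals $\zz{Vol}^{-1}$ times the Gaussian normal density, up to corrections of order $c\rho$. There are also tangential-smoothing errors, and these are controlled through the injectivity radius, which lets geodesic balls be compared with Euclidean ones. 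The contribution from noise leaving the tube is controlled by Gaussian tails of order $e^{-\rho^2}$ together with a crude entropy bound outside the tube. Choosing $\rho\sim \sqrt{\log c}$-type scales yields the factor $c^2\log^2 c$.

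The homogeneity from Proposition \ref{prop:stielfelproperty}-\ref{prop:homogeneous} makes the curvature and injectivity quantities uniform over the manifold. Consequently the maximum in \eqref{eq:cdefinition} can be evaluated at a single point, although the lemma as stated does not require this.

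The main obstacle is the factor $\delta^{-1}(1+c\,\snrc^{0.5})^{\delta}$, which accounts for the diameter/growth of the manifold: the outer region's entropy is controlled by a moment bound of $\|\RM{C}_1+\RM{Z}\|$. Every point of $S_{\snrc,\V{\alpha}}$ has Frobenius norm at most $\sqrt{mT/M}\,\snrc^{0.5}$, since each $\alpha_i\ge0$ is not guaranteed but is absorbed into the constant through the scaling by $c$. Bounding $\Expect\log(1+\|\cdot\|)$ by $\delta^{-1}\Expect(1+\|\cdot\|)^{\delta}$ using $\log x\le \delta^{-1}x^{\delta}$ produces this term. We will verify that the rescaled diameter is at most $c\,\snrc^{0.5}$ up to $\zz{const}(m,T)$. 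All remaining constants depend only on $(m,T)$ through the dimensions $d$ and $D$, which completes the reduction to \cite{Wei:T17}.
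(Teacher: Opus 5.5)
Your proposal follows the same route as the paper. The paper's proof is a one-line appeal to \cite[Thm.~4.2.1]{Wei:T17}, resting on the observation that $\RM{C}_1$ has constant density $\left(\zz{Vol}(S_{\snrc,\V{\alpha}})\right)^{-1}$ with respect to the volume measure. Your extra checks only make that citation more explicit: compact embedded submanifold, real codimension $m^2$, and the $\mathcal{CN}(0,1)$ normalization producing $\frac{m^2}{2}\log(\pi e)$.

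One aside is muddled. The bound $\|\M{S}\|_\zz{F}\le\sqrt{mT/M}\,\snrc^{0.5}$ on $S_{\snrc,\V{\alpha}}$ requires all $\alpha_i\ge 0$, and it cannot be absorbed into an $(m,T)$-constant otherwise. That said, the paper likewise takes the factor $(1+c_{\snrc,\V{\alpha}}\snrc^{0.5})^{\delta}$ straight from the cited theorem without any such check.
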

\begin{proof}
    This lemma follows from \cite[Thm. 4.2.1]{Wei:T17} by noting that the p.d.f. of $\RM{C}_1$ relative to the volume measure $\zz{Vol}(\cdot)$ is the constant $\left(\zz{Vol}(S_{\snrc,\V{\alpha}})\right)^{-1}$.
\end{proof}

Lemma \ref{lemma:differentialentropy} establishes that the target entropy is asymptotically approximated by the manifold-entropy $\log \zz{Vol}(S_{\snrc,\V{\alpha}})$, plus a term $\frac{1}{2}m^2\log(\pi e)$ representing the noise contribution within the manifold's normal bundle. The approximation error is bounded by the RHS of \eqref{eq:boundedentropy}, which is determined by $\snrc$ and the geometric parameter $c_{\snrc,\V{\alpha}}$ in \eqref{eq:cdefinition}. $c_{\snrc,\V{\alpha}}$ integrates three critical geometric properties: the maximum principal curvature $\max|\mathbf{II}_{\M{S}}(\Delta,\Delta)|$, the injectivity radius $\rho_{\top}(S_{\snrc,\V{\alpha}})$, and the maximal uniform tubular neighborhood radius $\rho_{\perp}(S_{\snrc,\V{\alpha}})$. In the following theorem, we show that these quantities vanish at a rate no slower than $\snrc^{\alpha_m-0.5}$. By selecting an appropriate $\delta$, the error term in \eqref{eq:boundedentropy} becomes negligible as $\snrc \to \infty$.
\begin{theorem}[Asymptotic Mutual Information]\label{theorem:asyentropy}
Given $\V{\alpha}$, the geometric parameter $c_{\snrc,\V{\alpha}}$ satisfies
\begin{equation}
    \lim_{\snrc\to\infty}c_{\snrc,\V{\alpha}}\overset{.}{\leq}\snrc^{\alpha_m-0.5}.
\end{equation}
The corresponding asymptotic differential entropy is given by
\begin{equation}
    \lim_{\snrc\to\infty}h_{\V{\alpha}}\left(\RM{C}_1+\RM{Z}_\zz{c1}\right)=m^2\log(\pi e)+\log \zz{Vol}\left(S_{\snrc,\V{\alpha}}\right).
\end{equation}
We obtain the following scaling result:
\begin{align}
    I_{\hcnb}({\RM{X}};\RM{Y}_\zz{c}) &\doteq h_{\V{\alpha}}(\RM{C}_1+\RM{Z}_\zz{c1}) \nonumber \\
    &\doteq \sum_{i=1}^{\rank{R}}(2T+1-2i)(0.5-\alpha_i)^+\log(\snrc).
\end{align}
\end{theorem}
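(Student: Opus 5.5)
The proof has three steps. First, bound the geometric constant $c_{\snrc,\V{\alpha}}$ by a scaling argument. Second, feed that bound into Lemma \ref{lemma:differentialentropy}. Third, evaluate $\log\zz{Vol}(S_{\snrc,\V{\alpha}})$ asymptotically and combine with Proposition \ref{prop:onehalf} and \eqref{eq:mutualinformation}.

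\emph{Step 1 (curvature and radii).} Write $S_{\snrc,\V{\alpha}}=\M{D}\,S_{\M{I}_m}^{(m,T)}$ with $\M{D}=\sqrt{T/M}\,\zz{diag}(\snrc^{0.5-\V{\alpha}_{1:m}})$. The smallest diagonal entry of $\M{D}$ is $d_{\min}\doteq\snrc^{0.5-\alpha_m}$. The map $\M{W}\mapsto\M{D}\M{W}$ is a linear embedding whose differential stretches every direction by at least $d_{\min}$.

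For the second fundamental form, take a unit-speed curve $\gamma(t)=\M{D}\M{W}(t)$ on the image. Its pre-image speed is at most $d_{\min}^{-1}$, so its acceleration is at most $\|\M{D}\|\,\|\ddot{\M{W}}\|$. Reparametrizing, $|\mathbf{II}|$ is bounded by $\|\M{D}\|\kappa_0/d_{\min}^2$, where $\kappa_0$ is the (constant) curvature bound of the standard Stiefel manifold.

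This crude estimate breaks down when the $\alpha_i$ are unequal, because $\|\M{D}\|/d_{\min}^2$ can exceed $d_{\min}^{-1}$. I would refine it using the explicit normal space at $\M{S}$, namely $\{\M{H}\M{D}^{-1}\M{S}:\M{H}\text{ Hermitian}\}$, and the tangent structure $\M{\Delta}\M{S}\ermtrans+\M{S}\M{\Delta}\ermtrans=0$. Computing $\mathbf{II}$ directly from the level-set equation $\M{X}\M{X}\ermtrans=\M{A}$ gives $\mathbf{II}(\M{\Delta},\M{\Delta})=-\text{proj}_N(\M{\Delta}\M{\Delta}\ermtrans\text{-term})$. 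A normal vector of the form $\M{A}^{-1}\M{\Delta}\M{\Delta}\ermtrans\M{S}$ then has norm $\lesssim\|\M{A}^{-1/2}\|\doteq d_{\min}^{-1}$, which yields $|\mathbf{II}|\overset{.}{\leq}\snrc^{\alpha_m-0.5}$.

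For the injectivity radius and the tubular radius, I would use the reach of the level set. Two points $\M{S}_1\neq\M{S}_2$ sharing a normal line must satisfy $\|\M{S}_1-\M{S}_2\|\gtrsim d_{\min}$, because distinct points of $\M{D}S_{\M{I}}$ sharing a normal are images of points at $O(1)$ distance. Hence both radii are $\gtrsim d_{\min}$, and their inverses are $\overset{.}{\leq}\snrc^{\alpha_m-0.5}$. I expect this step to be the main obstacle: the anisotropic scaling makes uniform curvature and reach bounds delicate, and the bound must be checked to depend on $d_{\min}$ only, not on the condition number of $\M{D}$.

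\emph{Step 2 (entropy).} Substitute $c\overset{.}{\leq}\snrc^{\alpha_m-0.5}$ into the right-hand side of \eqref{eq:boundedentropy}. Since $c\,\snrc^{0.5}\doteq\snrc^{\alpha_m}$ is bounded by a fixed power, the error term behaves like $\delta^{-1}\snrc^{\delta\alpha_m}\snrc^{2\alpha_m-1}\log^2\snrc$. Choosing $\delta$ small enough that $\delta\alpha_m<1-2\alpha_m$ makes it vanish, because $\alpha_m<0.5$. This gives the entropy formula, with the noise term taken from Lemma \ref{lemma:differentialentropy}.

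\emph{Step 3 (volume).} $\zz{Vol}(S_{\snrc,\V{\alpha}})$ is computed by the change of variables $\M{W}\mapsto\M{D}\M{W}$, with Jacobian $\sqrt{\det(\cdot)}$ of the induced metric on tangent spaces. Decompose a tangent space of $S_{\M{I}_m}^{(m,T)}$ into three parts:
\begin{itemize}
\item the $m(T-m)$ complex directions $\M{K}\M{W}_\perp$ with $\M{K}\in\Complex^{m\times(T-m)}$, whose row $i$ scales by $d_i$;
\item the $m^2$ real skew-Hermitian directions $\M{\Omega}\M{W}$, whose entry $(i,j)$ scales, up to constants, by $\max(d_i,d_j)$;
\item the remaining directions, which contribute only constants.
\end{itemize}
Counting exponents gives $\log\zz{Vol}\doteq\sum_i[2(T-m)+1+2(m-i)](0.5-\alpha_i)\log\snrc$. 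Here row $i$ dominates the pairs $(i,j)$ with $j>i$, and each pair contributes two real dimensions. This simplifies to $\sum_{i\le m}(2T+1-2i)(0.5-\alpha_i)\log\snrc$.

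The indices $i>m$ contribute $(\cdot)^+=0$, so the sum can be extended to $i\le\rank{R}$. Proposition \ref{prop:onehalf} and \eqref{eq:mutualinformation} then transfer the result to $I_{\hcnb}$, since the additive constant $mT\log(\pi e)$ is negligible against $\log\snrc$.
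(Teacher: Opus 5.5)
Your Steps 2 and 3 are sound and match the paper. The choice $\delta\alpha_m<1-2\alpha_m$ is the paper's $\delta<1/\alpha_m-2$. Your Jacobian count (factor $\sqrt{(d_i^2+d_j^2)/2}$ per off-diagonal skew-Hermitian direction, $d_i$ per diagonal one and per real direction in row $i$ of the complement block) is exactly the orthonormal-basis computation behind \eqref{eq:appen:volume}, done at a general point instead of via homogeneity. Your third bullet is empty, since $2m(T-m)+m^2=m(2T-m)$.

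The gap is in Step 1, which is the first claim of the theorem, and it starts with the curvature bound. Since $\Re\,\mathrm{tr}(\M{\Gamma}\M{S}\M{\Delta}\ermtrans)=\tfrac12\mathrm{tr}\big(\M{\Gamma}(\M{S}\M{\Delta}\ermtrans+\M{\Delta}\M{S}\ermtrans)\big)$, the normal space at $\M{S}$ is $\{\M{\Gamma}\M{S}:\M{\Gamma}=\M{\Gamma}\ermtrans\}$. Your set $\{\M{H}\M{D}^{-1}\M{S}\}$ agrees with it only when $\M{D}$ is scalar, which is precisely the case you do not need. Worse, $\M{A}^{-1}\M{\Delta}\M{\Delta}\ermtrans\M{S}$ is neither normal nor $O(d_{\min}^{-1})$. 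At $\M{S}=(\M{D},\M{0})$, take the unit tangent vector $\M{\Delta}=(\M{0},\M{\Delta}_2)$ whose rows $1$ and $m$ both equal $\V{u}/\sqrt2$. The $(m,1)$ entry of $\M{A}^{-1}\M{\Delta}\M{\Delta}\ermtrans\M{S}$ is then $d_1/(2d_m^2)\doteq\snrc^{2\alpha_m-\alpha_1-0.5}$, the same condition-number blow-up as your crude estimate.

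The repair is to reverse the order. Set $\M{Y}\triangleq-\M{\Delta}\M{\Delta}\ermtrans\M{A}^{-1}\M{S}$. It satisfies $\M{S}\M{Y}\ermtrans+\M{Y}\M{S}\ermtrans=-2\M{\Delta}\M{\Delta}\ermtrans$, which is also the equation for $\gamma''(0)$ of any curve in the manifold with $\gamma(0)=\M{S}$, $\gamma'(0)=\M{\Delta}$. Hence $\gamma''(0)-\M{Y}$ is tangent, $\mathbf{II}(\M{\Delta},\M{\Delta})$ is the normal projection of $\M{Y}$, and $|\mathbf{II}(\M{\Delta},\M{\Delta})|\le\|\M{\Delta}\M{\Delta}\ermtrans\M{D}^{-1}\|_{\mathrm F}\le 1/d_m$. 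The paper instead solves $\M{\Sigma}^2\M{\Gamma}+\M{\Gamma}\M{\Sigma}^2=-2\M{\Delta}\M{\Delta}\ermtrans$ entrywise, $\Gamma_{ij}=-2(\M{\Delta}\M{\Delta}\ermtrans)_{ij}/(\sigma_i^2+\sigma_j^2)$, and bounds the resulting norm by $1/\sigma_m$.

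The radius bounds are also not established. The map $\M{W}\mapsto\M{D}\M{W}$ does not carry normals to normals, so the claim that points sharing a normal are images of points $O(1)$ apart is unsupported. That claim is exactly the uniformity in $d_1/d_m$ you flag and leave open.

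With the correct normal space the argument is short, as in the paper. A point within distance $d_m$ of the manifold has a nearest point $\M{D}\M{U}$, so it can be written $\M{X}=(\M{I}+\M{\Gamma})\M{D}\M{U}$ with $\M{\Gamma}$ Hermitian and $\|\M{\Gamma}\M{D}\M{U}\|_{\mathrm F}<d_m$. Since $\|\M{\Gamma}\M{D}\M{U}\|_{\mathrm F}\ge d_m\|\M{\Gamma}\|_{\mathrm F}$, this forces $\M{D}(\M{I}+\M{\Gamma})\M{D}\succ\M{0}$. Minimizing $\|\M{X}-\M{D}\tilde{\M{U}}\|_{\mathrm F}$ then amounts to maximizing $\Re\,\mathrm{tr}\big(\M{D}(\M{I}+\M{\Gamma})\M{D}\M{U}\tilde{\M{U}}\ermtrans\big)$. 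Because $\M{U}\tilde{\M{U}}\ermtrans$ is a contraction, the unique maximizer is $\tilde{\M{U}}=\M{U}$. So $\rho_\perp\ge d_m$, with no dependence on $d_1/d_m$.

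For $\rho_\top$, the paper uses Klingenberg's lemma. The Gauss equation bounds the sectional curvature by $\max|\mathbf{II}|^2\le d_m^{-2}$. Closed geodesics have Euclidean curvature at most $1/d_m$, so by Fenchel's theorem their length is at least $2\pi d_m$; note that \eqref{eq:appen:lengthup} in the paper states this inequality the wrong way round. Together these give $\rho_\top\ge\pi d_m$. Your sketch needs an argument of this kind, because passing from a reach bound to an injectivity-radius bound is not automatic.
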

\begin{proof}
See Appendix D.
\end{proof}

\begin{remark}
Theorem \ref{theorem:asyentropy} establishes the asymptotic mutual information of the compound channel \eqref{eq:compoundch} via geometric analysis. For a given realization $\V{\alpha}$, the mutual information scales with $\log\snrc$ with a coefficient of $\sum_{i=1}^{\rank{R}}(2T+1-2i)(0.5-\alpha_i)^+$. Consistent with our intuition, only the log-singular values satisfying $\alpha_i < 0.5$ contribute to the asymptotic growth of $I_{\hcnb}(\RM{X};\RM{Y}_\zz{c})$. In the following section, we leverage the distribution of $\RV{\alpha}$ to evaluate the asymptotic outage probability and derive a converse bound on the sensing-constrained \ac{dmt}.
\end{remark}

\subsection{Converse Bound on the Sensing-Constrained \ac{dmt}}
In this section, we first determine the asymptotic outage probability with rate $r\log \snrc$ as $\snrc\to\infty$, i.e.,
\begin{equation}\label{eq:outagerereformulation}
    P_\zz{out}(r\log \snrc)\doteq\mathbb{P}_{\RV{\alpha}}\left(I_{\RV{\alpha}}(\RM{C}_\zz{1}+\RM{Z}_\zz{c1};\RM{W}_\zz{1})<rT\log\snrc\right)
\end{equation}
Applying Theorem \ref{theorem:asyentropy} and neglecting boundary sets of measure zero, the outage condition in \eqref{eq:outagerereformulation} defines the following region $\mathcal{A}\subset\Real^{\rank{R}}$ in the $\V{\alpha}$-domain:
\begin{equation}
   \mathcal{A}\triangleq\{
    \alpha_i<\alpha_{i+1},
    \sum\nolimits_{i=1}^{\rank{R}}(2T+1-2i)(0.5-\alpha_i)^+<Tr\}
\end{equation}
and the asymptotic outage probability is thus:
\begin{equation}\label{eq:outageevent}
    \vspace{-0.2em}
    P_\zz{out}(r\log \snrc)\doteq\int_\mathcal{A}p_{\RV{\alpha}}(\V{\alpha})\zz{d}\V{\alpha}.
    \vspace{-0.2em}
\end{equation}
The following lemma provides the joint p.d.f. $p_{\RV{\alpha}}(\V{\alpha})$ required to evaluate this integral.

\begin{lemma}\label{lemma:singularvalue}
    Let $\lambda_1>\lambda_2>...>\lambda_{\rank{R}}>0$
    \footnote{The eigenvalue distribution of $\RM{H}_\zz{c}\M{R}\RM{H}_\zz{c}$ is continuous w.r.t. $\{\lambda_i\}$ \cite{Tul:B04}. Consequently, \eqref{eq:alphadistributionasymp} readily extends to the case of repeated eigenvalues.}
    be the eigenvalues of the transmit covariance matrix $\M{R}$, the joint distribution of $\RV{\alpha}=(\rv{\alpha}_1,\rv{\alpha}_2,...,\rv{\alpha}_{\rank{R}})\trans$ is:
    \begin{align}\label{eq:alphadistribution}
        &p_{\RV{\alpha}}(\V{\alpha})=F(\snrc, \rank{R},\nc,\M{\Lambda}_{>0})\det\left(\left\{{e^{-\snrc^{-2\alpha_i}}}/{\lambda_j}\right\}\right)\nonumber\\
        &\prod_{s=1}^{\rank{R}}\snrc^{-2\alpha_s(\nc-\rank{R}+1)}\prod_{k<l}^{\rank{R}}\left(\snrc^{-2\alpha_k}-\snrc^{-2\alpha_l}\right)
    \end{align}
    where $F(\snrc, \rank{R},\nc,\M{\Lambda}_{>0})$ is the normalizing constant. As $\snrc\to\infty$, $p_{\RV{\alpha}}(\V{\alpha})$ has the following asymptotic expression
    \begin{equation}\label{eq:alphadistributionasymp}
    \begin{split}
        &\lim_{\snrc\to\infty}\left[\log{p_{\RV{\alpha}}(\V{\alpha})}/{\log\snrc}\right]=\\
        &\begin{cases}
        -2\sum_{i=1}^{\rank{R}}(\nc+\rank{R}+1-2i)\alpha_i, & \alpha_{1}>0\\
        -\infty, & \alpha_{1}<0
        \end{cases}
    \end{split}
    \end{equation}
\end{lemma}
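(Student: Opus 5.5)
The plan has two parts: first obtain the exact density \eqref{eq:alphadistribution} by a change of variables from the eigenvalue law of $\RM{H}_\zz{c}\M{R}\RM{H}_\zz{c}\ermtrans$, and then take the Laplace-type limit in \eqref{eq:alphadistributionasymp} by tracking powers of $\snrc$ factor by factor.

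\textbf{Reducing to a Gram matrix.} Write $\M{R}=\M{U}\M{\Lambda}\M{U}\ermtrans$ with $\M{\Lambda}_{>0}=\zz{diag}(\lambda_1,\dots,\lambda_{\rank{R}})$. Rayleigh fading is unitarily invariant, so $\RM{H}_\zz{c}\M{U}\overset{d}{=}\RM{H}_\zz{c}$. The nonzero eigenvalues $\mu_i=\sigma_i$ of $\RM{H}_\zz{c}\M{R}\RM{H}_\zz{c}\ermtrans$ (with the paper's convention $\hcnb\M{R}\hcnb\ermtrans=\M{V}\M{\Sigma}\M{V}\ermtrans$) therefore coincide with the eigenvalues of the $\rank{R}\times\rank{R}$ matrix $\M{\Lambda}_{>0}^{1/2}\RM{G}\ermtrans\RM{G}\M{\Lambda}_{>0}^{1/2}$, where $\RM{G}\in\Complex^{\nc\times\rank{R}}$ has i.i.d.\ $\mathcal{CN}(0,1)$ entries.

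\textbf{Exact density via correlated Wishart.} The matrix $\RM{G}\ermtrans\RM{G}$ is complex Wishart with $\nc\geq\rank{R}$ degrees of freedom. I would therefore invoke the known eigenvalue density of a correlated complex Wishart matrix (one-sided correlation), obtained from the Harish-Chandra--Itzykson--Zuber integral, as found in \cite{Tul:B04}:
\begin{equation*}
p(\V{\mu})\propto\det\left(\{e^{-\mu_i/\lambda_j}\}\right)\prod_i\mu_i^{\nc-\rank{R}}\prod_{k<l}(\mu_k-\mu_l)\Big/\prod_{k<l}(\lambda_k^{-1}-\lambda_l^{-1}),
\end{equation*}
with the $\lambda$-dependent factors absorbed into $F$.

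\textbf{Change of variables to $\V{\alpha}$.} Substitute $\mu_i=\snrc^{-2\alpha_i}$, matching the exponent in the stated formula (equivalently, treat $\mu_i$ as the squared singular value). The Jacobian is $|d\mu_i/d\alpha_i|=2\log\snrc\,\snrc^{-2\alpha_i}$. It contributes one extra power of $\snrc^{-2\alpha_i}$ per coordinate, turning $\mu_i^{\nc-\rank{R}}$ into $\snrc^{-2\alpha_i(\nc-\rank{R}+1)}$, while the $\log\snrc$ factors go into $F$. The ordering $\lambda$'s decreasing and $\alpha$'s increasing is consistent, and this yields \eqref{eq:alphadistribution}.

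\textbf{Asymptotic limit.} For $\alpha_1>0$, every $\mu_i\to0$, so the determinant $\det(\{e^{-\mu_i/\lambda_j}\})$ degenerates to $0$. This degeneration is the main obstacle. I would handle it by a Taylor expansion of each column: expanding $e^{-\mu_i/\lambda_j}$ in powers of $\mu_i$, the leading nonvanishing term of the determinant is proportional to $\prod_{k<l}(\mu_k-\mu_l)\cdot\prod_{k<l}(\lambda_k^{-1}-\lambda_l^{-1})$, by the standard confluent Vandermonde/Schur argument. This contributes a second Vandermonde in $\V{\mu}$.

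Since $\alpha_k<\alpha_l$, the difference $\mu_k-\mu_l\doteq\snrc^{-2\alpha_k}$. Thus each squared Vandermonde contributes $\prod_{k<l}\snrc^{-2\alpha_k}=\prod_i\snrc^{-2\alpha_i(\rank{R}-i)}$ per copy, and two copies give exponent $-2\sum_i 2(\rank{R}-i)\alpha_i$. Adding $-2\sum_i(\nc-\rank{R}+1)\alpha_i$ gives $-2\sum_i(\nc+\rank{R}+1-2i)\alpha_i$, as claimed.

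It remains to check that $F$ is $\snrc^{o(1)}$ on the log scale. It contains only $\lambda$'s and powers of $\log\snrc$, since the integral of the density normalizes to a constant times $(2\log\snrc)^{\rank{R}}$, so it vanishes after dividing by $\log\snrc$. Finally, if $\alpha_1<0$, then $\mu_1=\snrc^{-2\alpha_1}\to\infty$. The expansion of the determinant along the row $i=1$ gives entries that decay like $e^{-\snrc^{2|\alpha_1|}/\lambda_j}$, which is super-polynomial. Polynomial factors cannot compensate for this decay, so the log-ratio tends to $-\infty$.
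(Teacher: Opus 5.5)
Your proposal is correct and follows essentially the same route as the paper. Both use unitary-invariance reduction to a one-sided correlated complex Wishart matrix, the James/HCIZ eigenvalue density, the change of variables with its Jacobian, a small-$\mu$ expansion of $\det(\{e^{-\mu_i/\lambda_j}\})$ that produces a second Vandermonde, and a first-row expansion that gives super-polynomial decay when $\alpha_1<0$. Your Cauchy--Binet/Schur justification of the determinant's leading term and your check that $F=\snrc^{o(1)}$ are more careful than the paper's entrywise $(1+o(1))$ factorization. One thing to tidy: $\mu_i=\sigma_i^2$ is forced by the SVD $\M{H}_\zz{c}\RM{X}/\sqrt{T}=\M{V}\M{\Sigma}\RM{W}$, which gives $\hcnb\M{R}\hcnb\ermtrans=\M{V}\M{\Sigma}^2\M{V}\ermtrans$, so it is not an interpretive choice.
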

\begin{proof}
    See Appendix E.
\end{proof}

Lemma \ref{lemma:singularvalue} indicates that if any $\alpha_i$ falls below zero, the p.d.f. decays super-polynomially. Thus, neglecting the zero-measure boundaries, we focus on the positive orthant $\mathcal{A}' \triangleq \mathcal{A} \cap (\Real^+)^{\rank{R}}$. The asymptotic outage probability \eqref{eq:outageevent} satisfies
\begin{equation}\label{eq:outageanddout}
\begin{split}
        P_\zz{out}&(r\log \snrc)\doteq\int_{\mathcal{A}'}\snrc^{-2\sum_{i=1}^{\rank{R}}(\nc+\rank{R}+1-2i)\alpha_i}\zz{d}\V{\alpha}.\\
        &\doteq\snrc^{-2\underset{\V{\alpha}\in\mathcal{A}'}{\inf}\left(\sum_{i=1}^{\rank{R}}(\nc+\rank{R}+1-2i)\alpha_i\right)}\triangleq \snrc^{-d^\zz{out}_{\rank{R}}}
\end{split}
\end{equation}
where the second equality utilized Laplace's principle \cite{Dem:B09}.

Finally, solving for $d^{\zz{out}}_{\rank{R}}$ in \eqref{eq:outageanddout} characterizes the tradeoff between the asymptotic outage probability $P_\zz{out} \doteq \snrc^{-d^{\zz{out}}_{\rank{R}}}$ and the rate $r\log\snrc$ for the compound channel \eqref{eq:compoundch}. This simultaneously provides a converse bound on the sensing-constrained \ac{dmt}, leading to the main result of this paper.

\begin{theorem}\label{theo:final}
    The outage exponent $d^{\zz{out}}_{\rank{R}}(r)$ is the piecewise-linear function connecting the points $(r(k), d^{\zz{out}}_{\rank{R}}(r(k)))$ for $k \in \{0, \dots, \rank{R}\}$, where
    \begin{equation}\label{eq:newdmt}
        (r(k),d^{\zz{out}}_{\rank{R}}(r(k)))=\big( k\big( 1-\frac{k}{2T}\big),(\nc-k)(\rank{R}-k)\big).
    \end{equation}
    The function $d^{\zz{out}}_{\rank{R}}(r)$ serves as a converse bound for the sensing-constrained \ac{dmt} $d^*_{\M{R}}(r)$, i.e.,
    \begin{equation}\label{eq:newdmtupbound}
        d^*_{\M{R}}(r) \leq d^{\zz{out}}_{\rank{R}}(r).
    \end{equation}
\end{theorem}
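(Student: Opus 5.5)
The plan has two parts. First, I solve the variational problem in \eqref{eq:outageanddout} explicitly; this gives \eqref{eq:newdmt}. Second, I transfer the outage exponent to a converse on $d^*_{\M{R}}(r)$ with the Fano-type argument of \cite{ZheTse:J03}; this gives \eqref{eq:newdmtupbound}. Write $\rho\triangleq\rank{R}$, $w_i\triangleq \nc+\rho+1-2i$ (decreasing in $i$) and $c_i\triangleq 2T+1-2i$ (positive, since $T\ge M\ge\rho$).

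\emph{Step 1 (reduction to $[0,0.5]$).} I will first restrict to $\alpha_i\in[0,0.5]$. Replacing any $\alpha_i>0.5$ by $0.5$ leaves the rate functional $\sum_i c_i(0.5-\alpha_i)^+$ unchanged. It preserves the ordering and strictly lowers $\sum_i w_i\alpha_i$ when $w_i>0$. Every $w_i$ is positive because $\rho\le\nc$.

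\emph{Step 2 (linear program).} I substitute $\beta_i=0.5-\alpha_i$, so that $0.5\ge\beta_1\ge\dots\ge\beta_\rho\ge0$. The problem becomes the linear program
\[
\min \ \sum_i w_i(0.5-\beta_i) \quad \text{s.t.} \quad \sum_i c_i\beta_i\le Tr
\]
over the ordered cube (taking the closure of the strict inequality, which does not change the infimum). The ordered cube is the convex hull of the staircase vectors $\V{s}_k=0.5(\mathbf 1_{1:k},\V 0)$, $k=0,\dots,\rho$. A linear objective minimised over this simplex cut by one linear half-space attains its optimum on an edge between two staircase vertices. At $\V{s}_k$ the rate is $\tfrac12\sum_{i\le k}c_i=Tk-k^2/2=T\,r(k)$. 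The cost is $2\cdot\tfrac12\sum_{j>k}w_j=(\rho-k)(\nc+\rho+1)-(\rho-k)(\rho+k+1)=(\nc-k)(\rho-k)$.

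Hence $d^{\zz{out}}_\rho(r)$ is the lower convex envelope of the points $(r(k),(\nc-k)(\rho-k))$. I will check that these points are already in convex position. The slopes are $-(\nc+\rho-2k-1)/(1-(2k+1)/(2T))$. Their magnitudes decrease in $k$: the numerators decrease and the denominators decrease but stay positive. The envelope is therefore exactly the stated piecewise-linear interpolation. Adjacent-vertex interpolation is optimal because the only simplex edges cut by the rate hyperplane at level $r\in[r(k),r(k+1)]$ are the ones relevant to the envelope.

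\emph{Step 3 (converse).} The converse follows \cite[Lemma 5]{ZheTse:J03}. For any sensing-optimal codebook at rate $r\log\snrc$, condition on $\RM{H}_\zz{c}$ and apply Fano's inequality. Proposition \ref{prop:maxmutual} shows that the mutual information over $S_{T\M{R}}^{(M,T)}$ is maximised by $P_\zz{H}$ uniformly in $\M{H}_\zz{c}$. Therefore $P_\zz{e}\ge \Prob(\text{outage})\,(1-o(1))$, restricting to channel realisations where $I_{\RM{H}_\zz{c}}<T(r-\epsilon)\log\snrc$. Combining this with \eqref{eq:outageanddout} and letting $\epsilon\to0$, using continuity of $d^{\zz{out}}_\rho$, gives $d^*_{\M{R}}(r)\le d^{\zz{out}}_\rho(r)$.

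I expect the main obstacle to be the rigorous Laplace step together with this Fano step. In the Laplace step I must check that the $\doteq$ in Theorem \ref{theorem:asyentropy} is uniform enough in $\V\alpha$ on compact sets to exchange with the integral, and that the neglected boundary sets do not matter. In the Fano step, the fixed block length $T$ (as opposed to $T\to\infty$) must be handled by conditioning on $\RM{H}_\zz{c}$ exactly as in Zheng--Tse.
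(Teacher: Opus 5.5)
\textbf{Comparison with the paper.} Your route is the paper's route. The paper solves the linear program in $\V{\alpha}$ by exhibiting the staircase optimizers $\alpha_i\in\{0,0.5\}$ at $r=r(k)$, asserts affine interpolation in between, and defers the converse to \cite[Lem.~5]{ZheTse:J03}. Your Step~3 is the right way to run that Fano argument: Proposition~\ref{prop:maxmutual} bounds the codebook's mutual information by that of $P_\zz{H}$ for every $\M{H}_\zz{c}$. One small correction there: on $\{I_{\RM{H}_\zz{c}}<T(r-\epsilon)\log\snrc\}$, Fano gives a conditional error probability of at least $\epsilon/r-o(1)$, not $1-o(1)$. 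The constant is harmless for the exponent. Your barycentric/order-simplex argument, showing the LP value is the lower convex envelope of the points $(r(k),d_k)$ with $d_k=(\nc-k)(\rho-k)$, is a cleaner justification than the paper gives.

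\textbf{The gap: convex position.} The next step, which the paper skips and you attempt, is the claim that these points are in convex position, and your justification does not hold. You argue that the numerators $A_k=\nc+\rho-2k-1$ and the denominators $1-(2k+1)/(2T)=B_k/(2T)$, with $B_k=2T-2k-1$, both decrease. That does not imply their ratio decreases. Since $A_{k+1}=A_k-2$ and $B_{k+1}=B_k-2$ (with $B_{k+1}>0$ because $T\ge\rho$), one has $(A_k-2)/(B_k-2)\le A_k/B_k$ if and only if $A_k\le B_k$. That is, the slope magnitudes are non-increasing if and only if $\nc+\rho\le 2T$ (for $\rho\ge 2$).

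This needs the paper's standing assumption $T\ge\nc$, combined with $\rho\le\nc$. You only ever invoke $T\ge M\ge\rho$, and under that alone the claimed shape is false. For example, take $\rho=M=T=2$ and $\nc=10$. The points are $(0,20)$, $(3/4,9)$, $(1,0)$. The chord from the first to the last equals $5<9$ at $r=3/4$, and indeed $\beta_1=\beta_2=0.375$ is feasible there with cost $5$. So the LP value skips $k=1$.

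\textbf{Repair.} Replace your monotonicity remark by the one-line computation above, with an explicit appeal to $T\ge\nc$. The condition is tight: for $\nc=\rho=T$ all points are collinear, since $d_k=T^2-2T\,r(k)$. Separately, your remark that the rate hyperplane at level $r\in[r(k),r(k+1)]$ meets only the adjacent edge is inaccurate. It meets every edge $[s_j,s_l]$ with $j\le k<l$. Once convex position is established, however, the envelope characterization makes this irrelevant.
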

\begin{proof}
    See \cite[Appendix F]{DuLuShe:J26}.
\end{proof}

\begin{figure}[t]
	\centering
    \includegraphics[width=0.7\columnwidth]{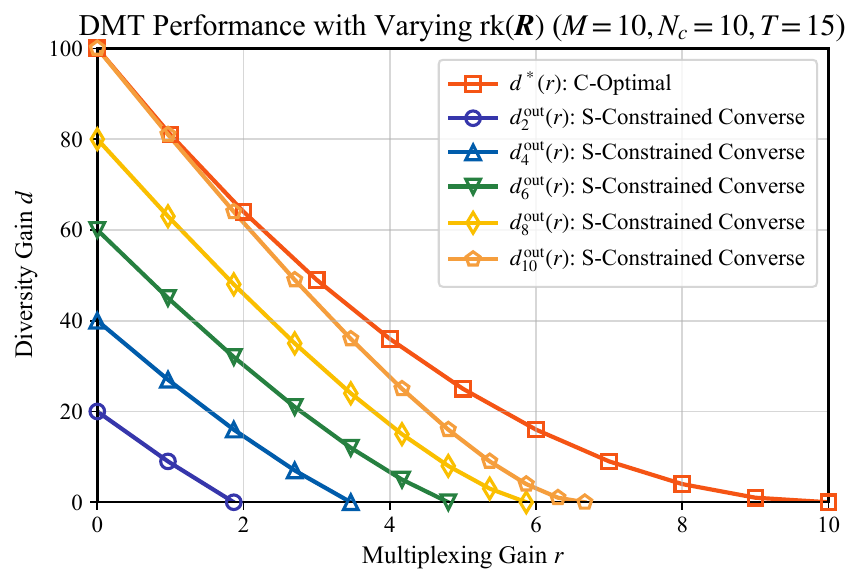}
    \vspace{-0.8em}
    \caption{Impact of $\rank{\M{R}}$ on the sensing-constrained \ac{dmt}. The converse bound for $d^*_{\M{R}}(r)$ also serves as the converse bound in the presence of $N_\zz{t} = \rank{\M{R}}$ sensing targets in Section IV-A.}
    \vspace{-1em}
    \label{fig:DMTVaryRk}
\end{figure}

\addtolength{\topmargin}{.05in}

To visualize the implications of Theorem \ref{theo:final}, we plot the derived converse bound alongside the original unconstrained communication \ac{dmt} \eqref{eq:originaldmt} in Fig. \ref{fig:DMTVaryRk}. As illustrated, our sensing-constrained bound lies strictly within the original \ac{dmt} curve. While structurally similar to the piecewise-linear formulation of \eqref{eq:originaldmt}, Theorem \ref{theo:final} reveals a fundamental shift: the performance is dictated by the effective rank $\rank{R}$ rather than the physical transmit antenna count $M$. Consequently, for a fixed $\rank{R}$, scaling up $M$ yields no additional diversity or multiplexing gains. In the full-rank regime ($\rank{R}=M$), where the bound reaches its maximum, the diversity gain $(\nc-k)(M-k)$ for a given $k$ matches the unconstrained case \eqref{eq:originaldmt}. However, the multiplexing gain incurs a penalty of $k^2/2T$, fundamentally reflecting the intrinsic loss in communication \ac{dof} necessitated by the sensing constraint.

Evaluating these extreme points yields further insights. At the diversity-optimal point $(0,\nc\rank{R})$, the system incurs no penalty compared to an unconstrained $\rank{R}$-antenna \ac{mimo} setup. Conversely, the multiplexing-optimal point is $\left(\rank{R}\left(1-\frac{\rank{R}}{2T}\right),0\right)$, which aligns perfectly with conventional \ac{dof} analysis. Because the transmit signal $\RM{X}$ resides on the generalized Stiefel manifold $S_{T\M{R}}^{(M,T)}$ with $\rank{R}(2T-\rank{R})$ real dimensions (\ac{dof}), and each real \ac{dof} contributes $1/2T$ to the capacity pre-log factor, our derived multiplexing gain is exactly recovered. This geometric consistency corroborates the tightness of our converse bound.

\begin{remark}
Ultimately, Theorem \ref{theo:final} gives a preliminary answer to the pivotal question posed at the outset. The loss in \ac{mimo} gain from transmitting sensing-optimal codewords is twofold: First, the sensing constraint fundamentally restricts the system to $\rank{R}$ effective Tx dimensions, limiting the entire \ac{dmt}. Second, compared to an unconstrained system with $M=\rank{R}$, although the full spatial diversity is preserved, the multiplexing gain incurs a penalty that grows at higher target rates.
Due to space limits, we assume $\rank{R}\leq \nc$, $T\geq \nc$. However, our framework readily generalizes by replacing $\rank{R}$ with $\min\{\nc,\rank{R}\}$. We defer this extension and the  achievability bound to future work.
\end{remark}

\section{Case Study}
This section conducts two case studies to gain deeper insights into sensing-constrained \ac{mimo} \ac{isac} systems.

\subsection{Angle Estimation of Multiple Targets}
In this scenario, a sensing \ac{bs} equipped with $M=10$ transmit antennas performs angular estimation on $N_\zz{t}$ targets while simultaneously communicating with a \ac{ue} having $\nc=10$ receive antennas. The sensing channel is modeled as
\begin{equation}\label{eq:casestudya}
\RM{H}_\zz{s}=\sum\nolimits_{n=1}^{N_\zz{t}}\rv{\beta}_n \V{a}(\rv{\theta}_n)\V{v}\ermtrans(\rv{\theta}_n)
\end{equation}
where $\rv{\beta}_n$ denotes the radar cross-section of the $n$-th target, $\{\rv{\theta}_n\}_{n=1}^{N_\zz{t}}$ represents the targets' angular positions, and $\V{a}(\rv{\theta}_n)$ and $\V{v}(\rv{\theta}_n)$ are the steering vectors of the transmit and sensing receive arrays, respectively. The parameter vector to be estimated is $\RV{\eta}=(\rv{\theta}_1, \dots, \rv{\theta}_{N_\zz{t}})$, with dimension $K=N_\zz{t}$. According to \cite[Co. 2]{XioLiuCui:J23}, the $\rank{R}$ satisfies $\rank{R}\leq \min\{M,N_\zz{t}\}$. Therefore, the sensing-constrained \ac{dmt} is upper-bounded by $d^*_{\M{R}}(r)\leq d^{\zz{out}}_{\rank{R}}(r) \leq d^{\zz{out}}_{N_\zz{t}}(r)$. 
Fig. \ref{fig:DMTVaryRk} plots the converse bound $d^{\zz{out}}_{N_\zz{t}}(r)$ across $N_\zz{t} \in \{2,4,6,8,10\}$. Counterintuitively, our \ac{dmt} bound improves as $N_\zz{t}$ increases. This phenomenon arises because Rayleigh fading channels inherently benefit from spatially diverse transmissions. While tracking fewer targets restricts the transmit beam to a limited set of spatial directions, a larger target count forces a more uniform spatial spectrum, thereby unlocking additional communication \ac{dof}.

\subsection{$\RM{H}_\zz{s}$ Estimation}
In this scenario, we consider a system with $M=\nc=3$ and a varying blocklength $T$, where the sensing \ac{bs} aims to estimate the sensing channel matrix $\RM{H}_\zz{s}$ directly. Assuming $\mathrm{vec}(\RM{H}_\zz{s}) \sim \mathcal{CN}(\V{0}, \M{I}_{MN_\zz{s}})$, it follows from \cite{XioLiuCui:J23} that the optimal covariance matrix is $\M{R}=\M{I}$ with $\rank{R}=3$. The corresponding sensing-constrained \ac{dmt} bound is illustrated in Fig. \ref{fig:DMTVaryT}. As $T$ increases, the constrained \ac{dmt} gradually approaches the original \ac{dmt}. This behavior can also be understood from a \ac{dof} perspective: transmitting over the space $S_{\M{R}}^{(M,T)}$ incurs a penalty of $M^2$ \acp{dof} per block. Consequently, the \ac{dof} loss per symbol is $M^2/T$. As $T$ grows, this fractional loss diminishes, which increases the available communication \ac{dof} per symbol and ultimately yields a higher \ac{dmt}.

This observation highlights another fundamental tradeoff unique to \ac{mimo} \ac{isac} systems: the tension between latency and \ac{mimo} gain. 
Unlike the original \ac{dmt} \eqref{eq:originaldmt}, which is independent of the blocklength $T$, our bound reveals that when optimal-sensing capabilities are incorporated, achieving lower latency sacrifices \ac{mimo} gain. Conversely, in the latency-unconstrained regime as $T \to \infty$, \eqref{eq:newdmt} shows that the sensing-constrained \ac{dmt} bound converges to the original \ac{dmt}.

\begin{figure}[t]
	\centering
    \includegraphics[width=0.7\columnwidth]{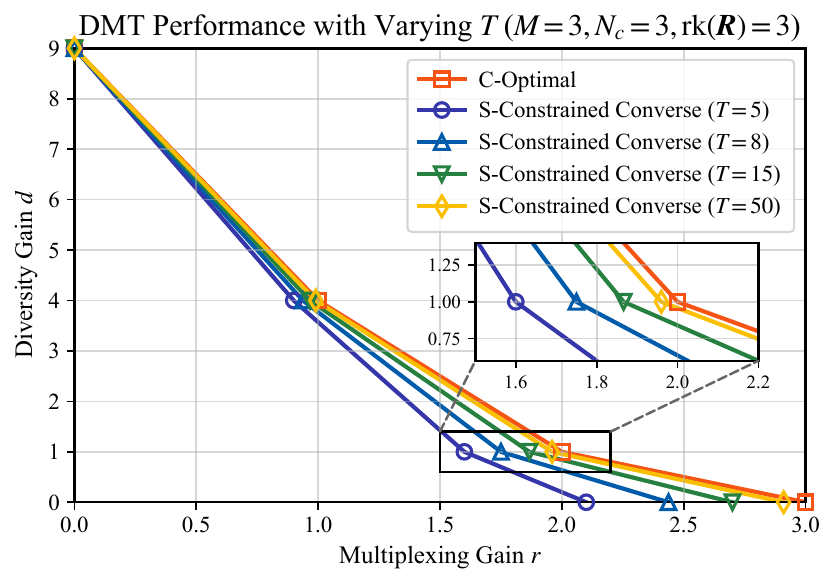}
    \vspace{-0.7em}
    \caption{Impact of $T$ on the sensing-constrained \ac{dmt} (see Section IV-B).}
    \label{fig:DMTVaryT}
    \vspace{-1em}
\end{figure}

\section{Conclusions}
This paper investigates the communication \ac{dmt} of a finite-blocklength \ac{mimo} \ac{isac} system constrained to sensing-optimal waveforms. Utilizing Riemannian geometry on generalized Stiefel manifolds and large-deviation analysis, we derive an elegant converse bound on the sensing-constrained \ac{dmt}. Specifically, this derived bound provides a quantitative characterization of the communication \ac{mimo} gain that is traded off to support optimal sensing.

\bibliographystyle{IEEEtran}

\bibliography{dependency/IEEEAbrv,dependency/StringDefinitions,dependency/SGroupDefinition,dependency/SGroup}

\appendices

\section{Proof of Proposition \ref{prop:stielfelproperty}}\label{appen:prop:stiefelproperty}
\subsection{Proof of Proposition \ref{prop:stielfelproperty}-a}
\begin{proof}
Since $U(n)$ acts smoothly and transitively on $S_{\M{A}}^{(k,n)}$, $S_{\M{A}}^{(k,n)}$ is a $U(n)$-homogeneous space. 

For $\M{U}\in U(n)$, let $\M{\Delta}_1,\M{\Delta}_2\in \zz{T}_{\M{S}} S_{\M{A}}^{(k,n)}$ for arbitrary $\M{S}\in S_{\M{A}}^{(k,n)}$. 
By right-multiplication, $\M{\Delta}_1,\M{\Delta}_2$ is pushed forward to $\M{\Delta}_1\M{U},\M{\Delta}_2\M{U}\in \zz{T}_{\M{S}\M{U}} S_{\M{A}}^{(k,n)}$, since
$
    g(\M{\Delta}_1\M{U},\M{\Delta}_2\M{U})=\Re\{\text{tr}(\M{\Delta}_1\M{U} \M{U}\ermtrans \M{\Delta}_2\ermtrans)\}=g(\M{\Delta}_1,\M{\Delta}_2)
$
, $g(\cdot,\cdot)$ is $U(T)$-invariant, i.e. $\M{U}$ introduces an isometry on $S_{\M{A}}^{(k,n)}$. 

Correspondingly, $\zz{Vol}(\cdot)$ is also $U(T)$-invariant.

Finally, since for a compact homogeneous space there exists, up to a scalar, a unique group-invariant measure, the normalized measure $P_\zz{H}(\cdot)$ is the unique $U(T)$-invariant probability measure on $M$.
\end{proof}
\subsection{Proof of Proposition \ref{prop:stielfelproperty}-b}
\begin{proof}
Since there exists only one unitary-invariant probability measure on the generalized Stiefel manifold, we only need to show that the distribution of $\M{K}\RM{F}$ is $U(n)$-invariant. 

To prove this, let $\mu$ denote the probability measure of $\M{K}\RM{F}$ on $S_{\M{K}\M{A}\M{K}\ermtrans}^{(m,n)}$, for arbitrary Borel set $E\subset S_{\M{K}\M{A}\M{K}\ermtrans}^{(m,n)}$ and arbitrary $\M{U}\in U(n)$, we have:
\begin{equation}
\begin{split}
    \mu(E\M{U})&=P_\zz{H}(\M{K}^{-1}(E\M{U}))\overset{(a)}{=}P_\zz{H}(\M{K}^{-1}(E)\M{U})\\
    &\overset{(b)}{=}P_\zz{H}(\M{K}^{-1}(E))=\mu(E)
\end{split}
\end{equation}
where $(a)$ comes from the associative property of matrix multiplication and $(b)$ comes from the group-invariance of $P_\zz{H}$. Therefore, $\mu$ is $U(n)$-invariant.
\end{proof}
    
\section{Proof of Proposition \ref{prop:maxmutual}}\label{appen:prop:maxmutual}
\begin{proof}
Note that for any $\M{U}\in U(T)$, $|\det \M{U}|=1$, thus:
\begin{equation}
    h\left(\M{H}_\zz{c}\RM{X}_t+\RM{Z}_\zz{c}\right)=h\left[(\M{H}_\zz{c}\RM{X}+\RM{Z}_\zz{c})\M{U}\right].
\end{equation}
Noticing $\RM{X}\indep\RM{Z}_t$ and $\RM{Z}_\zz{c}\sim\RM{Z}_\zz{c}\M{U}$, we have:
\begin{equation}
    h\left(\M{H}_\zz{c}\RM{X}_t+\RM{Z}_\zz{c}\right)=h\left(\M{H}_\zz{c}\RM{X}_t\M{U}+\RM{Z}_\zz{c}\right).
\end{equation}
Consider ${P}_\zz{H}$, the group-invariant measure on $U(T)$, using the concave property of the entropy function,
\begin{equation}
    \begin{split}
    &h\left(\M{H}_\zz{c}\RM{X}+\RM{Z}_\zz{c}\right)=h\left[(\M{H}_\zz{c}\RM{X}+\RM{Z}_\zz{c})\M{U}\right]=\\
        &\int_{U(T)} h\left[(\M{H}_\zz{c}\RM{X}+\RM{Z}_\zz{c})\M{U}\right]\zz{d}{P}_\zz{H}(\M{U})\leq \\
        &h\left[\int_{U(T)} (\M{H}_\zz{c}\RM{X}+\RM{Z}_\zz{c})\M{U}\zz{d}{P}_\zz{H}(\M{U})\right]=\\
        &h\left[\M{H}_\zz{c}\int_{U(T)} \RM{X}\M{U}\zz{d}{P}_\zz{H}(\M{U})+\RM{Z}_\zz{c}\right].
    \end{split}
\end{equation}
Since $\int_{U(T)} \RM{X}\M{U}\zz{d}{P}_\zz{H}(\M{U}) \sim P_\zz{H}$ on $S_{T\M{R}}^{(M,T)}$, we conclude that the optimal distribution is $P_\zz{H}$ on $S_{T\M{R}}^{(M,T)}$.
\end{proof}

\section{Proof of Proposition \ref{prop:onehalf}}\label{appen:prop:onehalf}
\begin{proof}
For brevity, we use the notation
\begin{equation}
    C(\snrc,\V{\alpha}_{1:m})\triangleq I_{\V{\alpha}}\left(\sqrt{\frac{T}{M}}
        \text{diag}\left(\snrc^{0.5-\V{\alpha}_{1:m}}\right)\RM{W}_1+\RM{Z}_\zz{c1};\RM{W}_1\right).
\end{equation}
Using Proposition \ref{prop:stielfelproperty}-b, one can view $\RM{W}_1$ as the first $m$ rows of $\RM{W}$, and $\sqrt{\frac{T}{M}} \zz{diag}(\snrc^{0.5-\alpha_1},...,\snrc^{0.5-\alpha_{m}})\RM{W}_1$ as the first $m$ rows of $\RM{C}$ as they are all uniformly distributed on their corresponding generalized Stiefel manifold. 

For the sake of brevity, we use following notations:
\begin{equation}
    \RM{W}\triangleq\left(\RM{W}_1\ermtrans,\RM{W}_2\ermtrans\right)\ermtrans,
    \RM{C}\triangleq\left(\RM{C}_1\ermtrans,\RM{C}_2\ermtrans\right)\ermtrans,
    \RM{Z}_\zz{c}\triangleq\left(\RM{Z}_\zz{c1}\ermtrans,\RM{Z}_\zz{c2}\ermtrans\right)\ermtrans
\end{equation}
where $\RM{W}_1, \RM{C}_1$ and $\RM{Z}_\zz{c1}$ denote the first $m$ rows of the corresponding matrices. Then the mutual information can be rewritten using the chain rule:
\begin{equation}
\begin{split}
    &I(\RM{C}+\RM{Z}_\zz{c};\RM{W}) = I(\RM{C}_1+\RM{Z}_\zz{c1},\RM{C}_2+\RM{Z}_\zz{c2};\RM{W}_1,\RM{W}_2) \\
    &= I(\RM{C}_1+\RM{Z}_\zz{c1};\RM{W}_1,\RM{W}_2) + I(\RM{C}_2+\RM{Z}_\zz{c2};\RM{W}_1,\RM{W}_2|\RM{C}_1+\RM{Z}_\zz{c1}) \\
    &= I(\RM{C}_1+\RM{Z}_\zz{c1};\RM{W}_1) + I(\RM{C}_1+\RM{Z}_\zz{c1};\RM{W}_2|\RM{W}_1) \\
    &\qquad\qquad\qquad + I(\RM{C}_2+\RM{Z}_\zz{c2};\RM{W}_1,\RM{W}_2|\RM{C}_1+\RM{Z}_\zz{c1}).
\end{split}
\end{equation}

Notice that given $\RM{W}_1$, $\RM{C}_1$ is completely determined. Since the noise $\RM{Z}_\zz{c1}$ is independent of $\RM{W}_2$, we have 
$I(\RM{C}_1+\RM{Z}_\zz{c1};\RM{W}_2|\RM{W}_1) = 0$. By definition, $I(\RM{C}_1+\RM{Z}_\zz{c1};\RM{W}_1) = C(\snrc,\V{\alpha}_{1:m})$. Thus, we obtain the lower bound:
\begin{equation}
\begin{split}
    I(\RM{C}&+\RM{Z}_\zz{c};\RM{W}) = C(\snrc,\V{\alpha}_{1:m}) +\\ &I(\RM{C}_2+\RM{Z}_\zz{c2};\RM{W}_1,\RM{W}_2|\RM{C}_1+\RM{Z}_\zz{c1}) 
    \geq C(\snrc,\V{\alpha}_{1:m}).
\end{split}
\end{equation}

On the other hand, expanding the conditional mutual information with differential entropy yields the upper bound:
\begin{equation}
\begin{split}
    I(\RM{C}+\RM{Z}_\zz{c};\RM{W}) &= C(\snrc,\V{\alpha}_{1:m}) + h(\RM{C}_2+\RM{Z}_\zz{c2}|\RM{C}_1+\RM{Z}_\zz{c1}) \\
    &\quad - h(\RM{C}_2+\RM{Z}_\zz{c2}|\RM{W}_1,\RM{W}_2,\RM{C}_1+\RM{Z}_\zz{c1}) \\
    &\leq C(\snrc,\V{\alpha}_{1:m}) + h(\RM{C}_2+\RM{Z}_\zz{c2}) - h(\RM{Z}_\zz{c2}).
\end{split}
\end{equation}
The inequality holds because conditioning reduces entropy, and given $\RM{W}_1, \RM{W}_2$, the signal part $\RM{C}_2$ is deterministic, leaving only the noise entropy $h(\RM{Z}_\zz{c2})$.

For $i > m$, we have $\alpha_i \geq 0.5$, which means $0.5 - \alpha_i \leq 0$. Therefore, as $\snrc \to \infty$, the power of the components in $\RM{C}_2$ is bounded (either staying constant or decaying to zero). Consequently, $h(\RM{C}_2+\RM{Z}_\zz{c2})$ is bounded, making the difference $h(\RM{C}_2+\RM{Z}_\zz{c2}) - h(\RM{Z}_\zz{c2})$ bounded by a constant.

Since $\alpha_{1:m} < 0.5$, $C(\snrc,\V{\alpha}_{1:m}) \to \infty$ as $\snrc \to \infty$. Dividing both bounds by $C(\snrc,\V{\alpha}_{1:m})$ and taking the limit, the bounded term vanishes, leading to:
\begin{equation}
    \lim_{\snrc\to\infty}\frac{C(\snrc,\V{\alpha}_{1:m})}{I(\RM{C}+\RM{Z}_\zz{c};\RM{W})} = 1.
\end{equation}
\end{proof}

\section{Proof of Theorem \ref{theorem:asyentropy}}\label{appen:theorem:asyentropy}
\begin{proof}
First we prove that $c_{\snrc,\V{\alpha}}\overset{.}{\leq}\snrc^{\alpha_m-0.5}$, which is to prove that on the generalized Stiefel manifold $S_{\snrc,\V{\alpha}}$, the three corresponding indicators in \eqref{eq:cdefinition} decline at a speed no less than $\snrc^{\alpha_m-0.5}$, respectively.

For the sake of brevity, in the following proof, we use the notation
\begin{equation}\label{eq:appen:sigmaalpha}
    \M{\Sigma}=\zz{diag}(\sigma_1,...,\sigma_m)\triangleq{\sqrt{\frac{T}{M}}{\zz{diag}(\snrc^{0.5-\V{\alpha}_{1:m}})}}
\end{equation}
where $\sigma_1\geq\sigma_2\geq...\geq\sigma_m$. Note that the definitions of $\M{\Sigma}$ and $\sigma_i$ here differ from the main text. Correspondingly, for arbitrary $\M{X}\in S_{\snrc,\V{\alpha}}$.
\begin{equation}\label{eq:appen:stiefeldefinition}
    \M{X}\M{X}\ermtrans=\M{\Sigma}^2.
\end{equation}

\subsection{Poof of $\max_{\Delta\in \zz{T}_{\M{S}}S_{\snrc,\V{\alpha}},g(\Delta,\Delta)=1}\left|\mathbf{II}_{\M{S}}(\M{\Delta},\M{\Delta})\right|\doteq\snrc^{\alpha_m-0.5}$}\label{appen:theorem:asyentropy:a}
\begin{proof}
Since $S_{\snrc,\V{\alpha}}$ is a homogeneous space, we may consider its second fundamental form at arbitrary $\M{S}\in S_{\snrc,\V{\alpha}}$. By differentiating \eqref{eq:appen:stiefeldefinition}, the tangent space T$\zz{T}_{\M{S}}S_{\snrc,\V{\alpha}}$ at $\M{S}$ is:
\begin{equation}\label{eq:appen:tangentspace}
    \zz{T}_{\M{S}}S_{\snrc,\V{\alpha}}=\left\{\M{\Delta}\in\Complex^{m\times T};\M{S}\M{\Delta}\ermtrans+\M{\Delta}\M{S}\ermtrans=\M{0}\right\}.
\end{equation}
The corresponding normal space at $\M{S}$ is:
\begin{equation}\label{eq:appen:normalspace}
    \zz{N}_{\M{S}}S_{\snrc,\V{\alpha}}=\left\{\M{\Gamma}\M{S};\M{\Gamma}=\M{\Gamma}\ermtrans\in\Complex^{m\times m}\right\}.
\end{equation}
which is easy to verify by first left-multiplying $\M{\Gamma}$ on the equation in \eqref{eq:appen:tangentspace} to get their orthogonality, and then counting two subspaces' dimensions.
Without loss of generality, set
\begin{equation}
    \M{S}=\left(\M{\Sigma},\M{0}_{m\times (T-m)}\right).
\end{equation}

To calculate $\mathbf{II}_{\M{S}}(\M{\Delta},\M{\Delta})$, we consider the unit-speed geodesic $\V{\gamma}(t)$ on $S_{\snrc,\V{\alpha}}$ starting at $\V{\gamma}(0)=\M{S}$ and initial velocity $\V{\gamma}'(0)=\M{\Delta}\in\zz{T}_{\M{S}}S_{\snrc,\V{\alpha}}$ with $g(\M{\Delta},\M{\Delta})=\zz{tr}(\M{\Delta}\M{\Delta}\ermtrans)=1$. Further differentiating \eqref{eq:appen:stiefeldefinition} along $\V{\gamma}(t)$ at $0$ yields:
\begin{equation}\label{eq:appen:secondderitive}
    \M{S}\V{\gamma}''(0)\ermtrans+\V{\gamma}''(0)\M{S}\ermtrans=-2\M{\Delta}\M{\Delta}\ermtrans.
\end{equation}
Since $\V{\gamma}(t)$ is a embedded geodesic, by basic differential geometry, we have $\V{\gamma}''(0)\perp\zz{T}_{\M{S}}S_{\snrc,\V{\alpha}}$, i.e. $\V{\gamma}''(0)\in\zz{N}_{\M{S}}S_{\snrc,\V{\alpha}}$. By \eqref{eq:appen:normalspace}, we re-express $\V{\gamma}''(0)$ as:
\begin{equation}
    \V{\gamma}''(0)=\M{\Gamma}_{\M{\Delta}}\M{S}, \quad\M{\Gamma}_{\M{\Delta}}=\M{\Gamma}_{\M{\Delta}}\ermtrans
\end{equation}
and \eqref{eq:appen:secondderitive} becomes:
\begin{equation}
    \M{S}\M{S}\ermtrans\M{\Gamma}_{\M{\Delta}}\ermtrans+\M{\Gamma}_{\M{\Delta}}\M{S}\M{S}\ermtrans=\M{\Sigma}^2\M{\Gamma}_{\M{\Delta}}\ermtrans+\M{\Gamma}_{\M{\Delta}}\M{\Sigma}^2=-2\M{\Delta}\M{\Delta}\ermtrans.
\end{equation}
Since $\M{\Sigma}$ is a diagonal matrix with positive diagonal entries, utilizing \eqref{eq:appen:stiefeldefinition}, one can directly calculate the Hermit matrix $\M{\Gamma}_{\M{\Delta}}$ and curvature $|\V{\gamma}''(0)|$ along the geodesic as:
\begin{equation}
    \left(\M{\Gamma}_{\M{\Delta}}\right)_{ij}=-\frac{2\left(\M{\Delta}\M{\Delta}\ermtrans\right)_{ij}}{\sigma_i^2+\sigma_j^2}
\end{equation}

\begin{equation}\label{eq:appen:curineq}
\begin{split}
    &|\V{\gamma}''(0)|=\sqrt{\zz{tr}\left(\V{\gamma}''(0)\V{\gamma}''(0)\ermtrans\right)}
    =\sqrt{\zz{tr}\left(\M{\Gamma}_{\M{\Delta}}\M{\Sigma}^2\M{\Gamma}_{\M{\Delta}}\ermtrans\right)}\\
    &=\sqrt{\sum_{i,j}\sigma_i^2\left|\left(\M{\Gamma}_{\M{\Delta}}\right)_{ij}\right|^2}
    =\sqrt{\frac{1}{2}\sum_{i,j}(\sigma_i^2+\sigma_j^2)\left|\left(\M{\Gamma}_{\M{\Delta}}\right)_{ij}\right|^2}\\
    &=\sqrt{\sum_{i,j}\frac{2\left|\left(\M{\Delta}\M{\Delta}\ermtrans\right)_{ij}\right|^2}{\sigma_i^2+\sigma_j^2}}
    \overset{(a)}{\leq}\sqrt{\sum_{i,j}\frac{2\left(\M{\Delta}\M{\Delta}\ermtrans\right)_{ii}\left(\M{\Delta}\M{\Delta}\ermtrans\right)_{jj}}{\sigma_i^2+\sigma_j^2}}
\end{split}
\end{equation}
where (a) comes from noting that $\M{\Delta}\M{\Delta}\ermtrans$ is positive semi-definite. denote the diagonal part of $\M{\Delta}\M{\Delta}\ermtrans$ by $x_i\triangleq\left(\M{\Delta}\M{\Delta}\ermtrans\right)_{ii}$, we can construct the following optimization problem for an upper bound of $|\V{\gamma}''(0)|$.

\begin{subequations}
\begin{align}
    \mathscr{P}_\zz{a-1}: \ \max\limits_{x_i}\ & {\sum_{i,j}\frac{2x_ix_j}{\sigma_i^2+\sigma_j^2}}\label{eq:appen:pa1opt}\\
    \zz{s.t.}\ \ \  
    &  \sum_i x_i=1,\ \ \ x_i\leq0. \label{eq:appen:pa1cons}
\end{align}
\end{subequations}
Note that \eqref{eq:appen:pa1opt} is a quadratic form, and using the fact that the matrix constructed by ${1}/({\sigma_i^2+\sigma_j^2})$ is positive semi-definite, the optimization function \eqref{eq:appen:pa1opt} is convex. Since the feasible region \eqref{eq:appen:pa1cons} form a simplex in $\Real^m$, thus the optimal $(x_{1,opt},...,x_{m,opt})$ lies on the simplex's corners, i.e. $x_{k,opt}=1$ for some $k\in [1,m]$. Since $\sigma_i$ is non-increasing by $i$, the solution to $\mathscr{P}_\zz{a-1}$ is $1/\sigma_m^2$ by setting $x_m=1$.

Combining the solution for $\mathscr{P}_\zz{a-1}$ and \eqref{eq:appen:curineq} yields an upper bound for $|\V{\gamma}''(0)|$, i.e.
\begin{equation}\label{eq:appen:curupb}
    |\V{\gamma}''(0)|\leq\frac{1}{\sigma_m}.
\end{equation}
Furthermore, the upper bound is tight. It can be achieved by choosing initial velocity $\V{\gamma}'(0)=\M{\Delta}_m\in  \zz{T}_{\M{S}}S_{\snrc,\V{\alpha}}$ s.t.
\begin{equation}
    \M{\Delta}_m=
    \begin{pmatrix}
    \M{0}_{(m-1)\times m} &  \M{0}_{(m-1)\times(T-m)}  \\
    \M{0}_{1\times m}        & \V{u}\\
    \end{pmatrix}
\end{equation}
with $\V{u}\V{u}\ermtrans=1$. By the Gauss formula along a curve\cite[Co. 8.3]{Lee:B18}, since $\V{\gamma}(t)$ is a unit speed geodesic, we have:
\begin{equation}\label{eq:appen:gauss}
    \mathbf{II}_{\M{S}}(\M{\Delta},\M{\Delta})=\V{\gamma}''(0).
\end{equation}
Therefore, we arrive at the desired expression:
\begin{equation}\label{eq:appen:maxsecond}
\begin{split}
       \max_{\Delta\in \zz{T}_{\M{S}}S_{\snrc,\V{\alpha}},g(\Delta,\Delta)=1} \left|\mathbf{II}_{\M{S}}(\M{\Delta},\M{\Delta})\right|&=\sqrt{\frac{M}{T}}\snrc^{\alpha_m-0.5}\\
    &\doteq\snrc^{\alpha_m-0.5}.
\end{split}
\end{equation}
\end{proof}

\subsection{Proof of $\rho_{\perp}\left(S_{\snrc,\V{\alpha}}\right)^{-1}\overset{.}{\leq}\snrc^{\alpha_m-0.5}$}
\begin{proof}
Since $S_{\M{I}_m}^{(m,T)}$ is compact, there exists a uniform tubular neighborhood around $S_{\M{I}_m}^{(m,T)}$\cite[Thm. 5.25]{Lee:B18}. To prove the desired inequality, we show that a uniform tubular neighborhood around $S_{\snrc,\V{\alpha}}$ exists with radius $r=\sigma_m$. For arbitrary
\begin{equation}
    \M{S}=\M{\Sigma}\M{U}\in S_{\snrc,\V{\alpha}}, \M{U}\in S_{\M{I}_m}^{(m,T)}.
\end{equation}
Utilizing the normal space \eqref{eq:appen:normalspace}, we construct the Fermi coordinate representation of the tubular neighborhood by choosing $\M{\Gamma}=\M{\Gamma}\ermtrans$ and let
\begin{equation}\label{eq:appen:fermi}
    \M{X}=\M{S}+\M{\Gamma}\M{S}=(\M{I}+\M{\Gamma})\M{\Sigma}\M{U}
\end{equation}
and the uniform tubular neighborhood $T(S_{\snrc,\V{\alpha}},r)$ around $S_{\snrc,\V{\alpha}}$ of radius $r$ is:
\begin{equation}\label{eq:appen:tubularnbhd}
\begin{split}
    &T(S_{\snrc,\V{\alpha}},r)=\\
    &\left\{(\M{I}+\M{\Gamma})\M{\Sigma}\M{U};\M{U}\M{U}\ermtrans=\M{I}_m, \M{\Gamma}=\M{\Gamma}\ermtrans,
    \|\M{\Gamma}\M{\Sigma}\M{U}\|_\zz{F} < r
    \right\}
\end{split}
\end{equation}
To show that $T(S_{\snrc,\V{\alpha}},\sigma_m)$ is indeed a tubular neighborhood, using the theorem of invariance of domain one needs to prove \eqref{eq:appen:fermi} is a injection. Consider the Euclidean projection of $\M{X}\in T(S_{\snrc,\V{\alpha}},r)$ to $S_{\snrc,\V{\alpha}}$, i.e. the following optimization problem
\begin{subequations}
\begin{align}
    \mathscr{P}_\zz{a-2}: \ \min\limits_{\tilde{\M{U}}}\ & \|(\M{I}+\M{\Gamma})\M{\Sigma}\M{U}-\M{\Sigma}\tilde{\M{U}}\|_\zz{F}\label{eq:appen:pa2opt}\\
    \zz{s.t.}\ \ \  
    &  \tilde{\M{U}}\tilde{\M{U}}\ermtrans=\M{I}_m. \label{eq:appen:pa2cons}
\end{align}
Minimizing \eqref{eq:appen:pa2opt} is equivalent to maximizing
\begin{align}
    \max\limits_{\tilde{\M{U}}}\  &\Re\left\{\zz{tr}\left[(\M{I}+\M{\Gamma})\M{\Sigma}\M{U}\tilde{\M{U}}\ermtrans\M{\Sigma}\right]\right\}\\
    =&\Re\left\{\zz{tr}\left[\M{\Sigma}(\M{I}+\M{\Gamma})\M{\Sigma}\M{U}\tilde{\M{U}}\ermtrans\right]\right\}.\label{eq:appen:optimizemid}
\end{align}
\end{subequations}
Since $\|\M{\Gamma}\M{\Sigma}\M{U}\|_\zz{F}\leq \sigma_m$, using the fact $0\leq\sigma_m\leq\sigma_i$,we have
\begin{equation}
    \sigma_m^2\zz{tr}\left(\M{\Gamma}\M{\Gamma}\ermtrans\right)\leq\zz{tr}\left(\M{\Gamma}\M{\Sigma}^2\M{\Gamma}\ermtrans\right)<\sigma_m^2
\end{equation}
therefore any eigenvalue $\lambda(\M{\Gamma})$ of $\M{\Gamma}$,
\begin{equation}
    |\lambda(\M{\Gamma})|< 1,\quad \M{\Sigma}(\M{I}+\M{\Gamma})\M{\Sigma}\succ \M{\Sigma}(\M{I}-\M{I})\M{\Sigma}=\M{0}
\end{equation}
and let $\M{\Sigma}(\M{I}+\M{\Gamma})\M{\Sigma}=\M{U}_1\M{\Lambda}_1\M{U}_1\ermtrans$ be the eigenvalue decomposition, the optimization objective \eqref{eq:appen:optimizemid} further becomes:
\begin{equation}
    \max\limits_{\tilde{\M{U}}}\  \Re\left\{\zz{tr}\left(\M{\Lambda}_1\M{U}_1\ermtrans\M{U}\tilde{\M{U}}\ermtrans\M{U}_1\right)\right\}.
\end{equation}
Since $\M{\Lambda}_1\succ \M{0}$ and $\M{U}_1\ermtrans\M{U}\tilde{\M{U}}\ermtrans\M{U}_1$ is also unitary, it is trivial that the only solution to $\mathscr{P}_\zz{a-2}$ is
\begin{equation}\label{eq:appen:opt2sol}
    \M{U}_1\ermtrans\M{U}\tilde{\M{U}}\ermtrans\M{U}_1=\M{I}, \quad\text{i.e.}\quad\tilde{\M{U}}=\M{U}.
\end{equation}
That is, arbitrary matrix $\M{X}=\M{S}+\M{\Gamma}\M{S}\in T(S_{\snrc,\V{\alpha}},r)$ has unique Euclidean projection $\M{S}$ w.r.t. $S_{\snrc,\V{\alpha}}$. If $\M{X}$ can also be expressed as $\M{X}=\M{S}_1+\M{\Gamma}_1\M{S}_1$, the uniqueness of the Euclidean projection doesn't hold. Therefore $T(S_{\snrc,\V{\alpha}},\sigma_m)$ is a uniform tubular neighborhood indeed. And we arrive at:
\begin{equation}
    \rho_{\perp}\left(S_{\snrc,\V{\alpha}}\right)^{-1}\leq \frac{1}{\sigma_m}=\sqrt{\frac{M}{T}}\snrc^{\alpha_m-0.5}\doteq \snrc^{\alpha_m-0.5}.
\end{equation}
\end{proof}

\subsection{Proof of $\rho_{\top}\left(S_{\snrc,\V{\alpha}}\right)^{-1}\overset{.}{\leq}\snrc^{\alpha_m-0.5}$}
\begin{proof}
To prove the desired inequality, we use the following celebrated Klingenberg's Theorem:
\begin{lemma}[Klingenberg]\label{lemma:appen:Klin}
{Let $(M,g)$ be a compact Riemannian manifold whose sectional curvature satisfies $K \le C$ for some constant $C$. Then either}
\begin{equation}
    \rho_\top(M) \ge \frac{\pi}{\sqrt{C}}
\end{equation}
{or there exists a closed geodesic $\gamma$ in $M$ whose length is minimum among all closed geodesics, such that}
\begin{equation}
    \rho_\top(M) = \frac{1}{2}L(\gamma).
\end{equation}
\end{lemma}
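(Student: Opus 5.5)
The plan is to follow the classical argument via the cut locus, using compactness to locate a point where the injectivity radius is attained. Write $\rho_\top(M)=\inf_{p\in M}\mathrm{inj}(p)$, where $\mathrm{inj}(p)=d(p,\mathrm{Cut}(p))$. If $C\le 0$ we read $\pi/\sqrt{C}$ as $+\infty$; for $C>0$ we argue as follows.

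The function $p\mapsto \mathrm{inj}(p)$ is continuous and $M$ is compact, so the infimum is attained. Choose $p\in M$ and $q\in\mathrm{Cut}(p)$ with $d(p,q)=\rho_\top(M)\triangleq\ell$. By the standard structure of the cut locus, exactly one of two things holds. Either $q$ is conjugate to $p$ along a minimizing geodesic, or there are (at least) two distinct minimizing unit-speed geodesics $\gamma_1,\gamma_2$ from $p$ to $q$.

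\emph{Conjugate case.} Since $K\le C$, the Rauch comparison theorem (equivalently, a Jacobi-field or index-form comparison with the round sphere of curvature $C$) shows there is no conjugate point along any geodesic before length $\pi/\sqrt{C}$. Hence $\ell\ge \pi/\sqrt{C}$, which is the first alternative.

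\emph{Non-conjugate case.} We may assume $\ell<\pi/\sqrt{C}$, so no conjugate points occur, and we show that $\gamma_1\cup\gamma_2$ is a smooth closed geodesic of length $2\ell$.
\begin{enumerate}
\item \emph{Smoothness at $q$.} First we show $\gamma_1'(\ell)=-\gamma_2'(\ell)$. Suppose not. Then some vector $v\in \zz{T}_qM$ makes an acute angle with both $-\gamma_1'(\ell)$ and $-\gamma_2'(\ell)$. Since $q$ is not conjugate, the exponential map at $p$ is a local diffeomorphism near the preimages of $q$. We can therefore deform both geodesics simultaneously by moving $q$ in direction $v$. The first variation formula then shows that both lengths strictly decrease, producing cut points of $p$ closer than $\ell$. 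This contradicts $d(p,\mathrm{Cut}(p))=\ell$.
\item \emph{Smoothness at $p$.} Since the cut relation is symmetric, $p\in\mathrm{Cut}(q)$ with $d(q,p)=\ell$, and $\ell$ is also the global minimum of $\mathrm{inj}$. The same argument with the roles of $p$ and $q$ exchanged gives $\gamma_1'(0)=-\gamma_2'(0)$.
\end{enumerate}
The concatenation is therefore a closed geodesic $\gamma$ with $L(\gamma)=2\ell$. It is minimal among closed geodesics because any closed geodesic $\sigma$ satisfies $\mathrm{inj}(x)\le L(\sigma)/2$ for $x\in\sigma$: the point halfway along $\sigma$ is reached by two geodesic arcs of length $L(\sigma)/2$, so it is either a cut point or lies beyond one. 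This yields $\rho_\top(M)=\tfrac12 L(\gamma)$.

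The main obstacle is the first-variation step. It requires making precise that, in the absence of conjugate points, the cut points of $p$ near $q$ form a hypersurface in which we can move $q$ while tracking both geodesic branches. I would handle this with the inverse function theorem for $\exp_p$ on neighbourhoods of the two preimage vectors. Alternatively, I would simply cite the standard treatment (e.g.\ do Carmo, Ch.~13, or Petersen), since the paper only uses this lemma as a black box for bounding $\rho_\top(S_{\snrc,\V{\alpha}})$.
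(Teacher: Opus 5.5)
The paper gives no proof of this lemma. It quotes it as the classical Klingenberg lemma (as in do Carmo, Ch.~13) and uses it only as a black box, so there is no competing argument to compare with. Your proposal is the standard textbook proof and it is correct: attainment of $\min\mathrm{inj}$ by continuity on compact $M$, the conjugate case via Rauch/Morse--Schoenberg, the first-variation argument giving $\gamma_1'(\ell)=-\gamma_2'(\ell)$, smoothness at $p$ from symmetry of the cut relation together with $\mathrm{inj}(q)\le d(q,p)=\ell=\min\mathrm{inj}$, and the half-length argument for minimality among closed geodesics.

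Three small points of polish:
\begin{enumerate}
\item The dichotomy for a closest cut point is ``at least one'', not ``exactly one''. The point $q$ may be conjugate along one minimizing geodesic while a second minimizing geodesic also exists. This is harmless, because your case split is really $\ell\ge\pi/\sqrt{C}$ versus $\ell<\pi/\sqrt{C}$, and in the latter case no minimizing geodesic carries a conjugate point.
\item The ``main obstacle'' you flag needs no hypersurface structure of the cut locus. Local inverses of $\exp_p$ on disjoint neighbourhoods of $\ell\gamma_1'(0)$ and $\ell\gamma_2'(0)$ give two distinct geodesics from $p$ to $\sigma(s)$. By first variation, both have length $<\ell$. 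Every geodesic from $p$ of length $<\ell=d(p,\mathrm{Cut}(p))$ is the unique minimizing geodesic to its endpoint, so the two would have to coincide, which is already the contradiction.
\item The case $C\le 0$ needs no convention. Replace $C$ by any $C'>0$ with $\pi/\sqrt{C'}>\rho_\top(M)$, which is possible since $\rho_\top(M)$ is finite on compact $M$. The first alternative is then excluded, so the second must hold.
\end{enumerate}
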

Since $S_{\snrc,\V{\alpha}}$ is compact, Lemma \ref{lemma:appen:Klin} is suitable, thus we focus on the upper bound of the sectional curvature $K$ and the minimum distance $L(\gamma)$ of a closed geodesic $\gamma$ on $S_{\snrc,\V{\alpha}}$.

First, note that the sectional curvature $K$ is bounded by:
\begin{equation}\label{eq:appen:scurvatureupbound}
   K=g(\mathbf{II}_{\M{S}}(\M{\Delta}_1,\M{\Delta}_1),\mathbf{II}_{\M{S}}(\M{\Delta}_2,\M{\Delta}_2))\leq \max_{\Delta} \left|\mathbf{II}_{\M{S}}(\M{\Delta},\M{\Delta})\right|^2
\end{equation}
where unit tangent vectors $\M{\Delta}_1,\M{\Delta}_2,\M{\Delta}\in\zz{T}_{\M{S}}S_{\snrc,\V{\alpha}}$ with $\M{\Delta}_1\perp\M{\Delta}_2$. 

Second, for arbitrary closed geodesic $\V{\gamma}(t)$ in $S_{\snrc,\V{\alpha}}$, according to the analysis in Section \ref{appen:theorem:asyentropy:a}, we have:
\begin{equation}
    |\V{\gamma}''(t)|=|\mathbf{II}_{\V{\gamma}(t)}(\V{\gamma}'(t),\V{\gamma}'(t))|\leq\frac{1}{\sigma_m}
\end{equation}
therefore, for $\gamma$ to be closed, we have
\begin{equation}\label{eq:appen:lengthup}
    L(\V{\gamma})\leq \frac{2\pi}{\sigma_m}.
\end{equation}
Therefore, combining \eqref{eq:appen:maxsecond}, \eqref{eq:appen:scurvatureupbound} and \eqref{eq:appen:lengthup}, Lemma \ref{lemma:appen:Klin} yields our desired result
\begin{equation}
    \rho_{\top}\left(S_{\snrc,\V{\alpha}}\right)^{-1}\overset{.}{\leq}\snrc^{\alpha_m-0.5}.
\end{equation}
\end{proof}

Combining these results, we arrive at:
\begin{equation}
    c_{\snrc,\V{\alpha}}\overset{.}{\leq}\snrc^{\alpha_m-0.5}
\end{equation}
substituting the inequality to \eqref{eq:boundedentropy} and focusing on its RHS, if $\alpha_m\leq 0$, the RHS obviously converges to zero as $\snrc\to\infty$, if $\alpha_m>0$, we have:
\begin{equation}
\begin{split}
    \quad&\lim_{\snrc\to\infty}\delta^{-1}(1+c_{\snrc,\V{\alpha}}\snrc^{0.5})^{\delta}c_{\snrc,\V{\alpha}}^2\log^2(c_{\snrc,\V{\alpha}})\\
    =&\lim_{\snrc\to\infty}\delta^{-1}(1+\snrc^{\alpha_m})^{\delta}\snrc^{2\alpha_m-1}\log^2(\snrc^{2\alpha_m-1})\\
    =&\lim_{\snrc\to\infty}{(\delta(2\alpha_m-1))}^{-1}\snrc^{(2+\delta)\alpha_m-1}\log^2(\snrc)\\
    \overset{(a)}{=}&0
\end{split}
\end{equation}
where (a) comes from choosing $0<\delta<1/\alpha_m-2$. Thus by inequality \eqref{eq:boundedentropy}, we have
\begin{equation}\label{eq:appen:entropylimit}
    \lim_{\snrc\to\infty}\left|h\left(\RM{C}_1+\RM{Z}_\zz{c1}\right)-m^2\log(\pi e)-\log \zz{Vol}\left(S_{\snrc,\V{\alpha}}\right)\right|.
\end{equation}
Next, we show that
\begin{equation}\label{eq:appen:volume}
    \begin{split}
    \zz{Vol}&\left(S_{\snrc,\V{\alpha}}\right)=\\
    &2^{-\frac{m(m-1)}{2}}\prod_{i<j}(\sigma_i^2+\sigma_j^2)\prod_i\sigma_i^{2T-2m+1}\zz{Vol}(S_{\M{I}_m}^{(m,T)})
    \end{split}
\end{equation}
where
\begin{equation}
    \zz{Vol}(S_{\M{I}_m}^{(m,T)})=\prod_{i=T-m+1}^{T}\frac{2\pi^i}{(i-1)!}
\end{equation}
is the Euclidean volume of a standard Complex Stiefel manifold\cite{ZheTse:J02}.
\subsection{Proof of \eqref{eq:appen:volume}}
\begin{proof}
Consider the diffeomorphism 
\begin{equation}
    \phi: S_{\M{I}_m}^{(m,T)}\to S_{\snrc,\V{\alpha}}, \quad\M{X}\mapsto\M{\Sigma}\M{X}
\end{equation}
Let $dV_1$ and $dV_2$ denote the Euclidean Riemannian volume form on $S_{\M{I}_m}^{(m,T)}$ and $S_{\snrc,\V{\alpha}}$, respectively. By definition:
\begin{equation}\label{eq:appen:volumetrans}
    \zz{Vol}(S_{\snrc,\V{\alpha}})=\int_{S_{\snrc,\V{\alpha}}}dV_2=\int_{S_{\M{I}_m}^{(m,T)}}\phi^* dV_2=\int_{S_{\M{I}_m}^{(m,T)}}f dV_1
\end{equation}
for some function $f$ on $S_{\M{I}_m}^{(m,T)}$, where $\phi^*$ denotes the pull-back operation of differential forms. Since both manifolds are $U(T)$-homogeneous, and utilizing associative property of matrix multiplication, for arbitrary $\M{U}\in U(T)$, let $u: \M{X}\mapsto \M{X}\M{U}$, for arbitrary $\M{S}\in S_{\M{I}_m}^{(m,T)}$, equation \eqref{eq:appen:invariance} holds, thus $f$ is a constant on $S_{\M{I}_m}^{(m,T)}$.

\begin{figure*}[th]\label{eq:appen:invariance}
\begin{equation}
    f_{u(\M{S})}dV_1|_{u(\M{S})}=\phi^*dV_2|_{\phi\circ u(\M{S})}=\phi^*dV_2|_{u\circ\phi(\M{S})}=(u\circ u^{-1}\circ\phi)^*dV_2|_{u\circ\phi(\M{S})}
    =( u^{-1}\circ\phi)^*dV_2|_{\phi(\M{S})}=(u^{-1})^*f_{\M{S}}dV_1|_{\M{S}}=f_{\M{S}}dV_1|_{u(\M{S})}
\end{equation}
\begin{equation}\label{eq:appen:fcaluculation}
    f=fdV_1(\M{V}_{1:5})=\phi^*dV_2(\M{V}_{1:5})=dV_2(\phi_*(\M{V}_{1:5}))=dV_2\big(\sqrt{\frac{{\sigma_i^2+\sigma_j^2}}{2}}\M{W}_{1,2,ij},\sigma_i\M{W}_{3:5,ij}\big)=2^{-\frac{m(m-1)}{2}}\prod_{i<j}(\sigma_i^2+\sigma_j^2)\prod_i\sigma_i^{2T-2m+1}
\end{equation}
\begin{equation}\label{eq:appen:entropyform}
    \lim_{\snrc\to\infty}\left|h\left(\RM{C}_1+\RM{Z}_\zz{c1}\right)-\frac{m^2}{2}\log(\pi e)-\log \left(2^{-\frac{m(m-1)}{2}}\prod_{i<j}(\sigma_i^2+\sigma_j^2)\prod_i\sigma_i^{2T-2m+1}\prod_{i=T-m+1}^{T}\frac{2\pi^i}{(i-1)!})\right)\right|=0
\end{equation}
\begin{equation}\label{eq:appen:mutualinformationform}
    \lim_{\snrc\to\infty}\left|I_{\RM{H}_\zz{c}}({\RM{X}};\RM{Y}_\zz{c})-\frac{m(2T-m)}{2}\log(\pi e)-\log \left(2^{-\frac{m(m-1)}{2}}\prod_{i<j}(\sigma_i^2+\sigma_j^2)\prod_i\sigma_i^{2T-2m+1}\prod_{i=T-m+1}^{T}\frac{2\pi^i}{(i-1)!})\right)\right|=0
\end{equation}
\rule{\linewidth}{0.5pt}
\end{figure*}

Without loss of generality, we consider $f$ at
\begin{equation}
    \M{S}=\left(\M{I}_m,\M{0}_{m\times (T-m)}\right), \phi(\M{S})=\left(\M{\Sigma},\M{0}_{m\times (T-m)}\right)
\end{equation}
and find the orthonormal basis for $\zz{T}_{\M{S}}S_{\M{I}_m}^{(m,T)}$ and $\zz{T}_{\phi(\M{S})}S_{\snrc,\V{\alpha}}$. Specifically, the following tangent matrices form a othonormal basis for $\zz{T}_{\M{S}}S_{\M{I}_m}^{(m,T)}$:
\begin{subequations}
\begin{align}
    \M{V}_{1,ij}&=\left(\frac{1}{\sqrt{2}}(\M{E}_{ij}-\M{E}_{ji}),\M{0}_{m\times (T-m)}\right),\quad i<j\\
    \M{V}_{2,ij}&=\sqrt{-1}\left(\frac{1}{\sqrt{2}}(\M{E}_{ij}+\M{E}_{ji}),\M{0}_{m\times (T-m)}\right),\quad i<j\\
    \M{V}_{3,i}&=\sqrt{-1}\left(\M{E}_{ii},\M{0}_{m\times (T-m)}\right)\\
    \M{V}_{4,ij}&=\left(\M{0}_{m\times m},\M{E}_{ij}\right)\\
    \M{V}_{5,ij}&=\sqrt{-1}\left(\M{0}_{m\times m},\M{E}_{ij}\right)\\
\end{align}
\end{subequations}
and we organize them as $\M{V}_{1:5}$. Similarly, a set of orthonormal basis for $\zz{T}_{\phi(\M{S})}S_{\snrc,\V{\alpha}}$ is:
\begin{subequations}
\begin{align}
    \M{W}_{1,ij}&=\left(\frac{\sigma_i\M{E}_{ij}-\sigma_j\M{E}_{ji}}{\sqrt{\sigma_i^2+\sigma_j^2}},\M{0}_{m\times (T-m)}\right),\quad i<j\\
    \M{W}_{2,ij}&=\sqrt{-1}\left(\frac{\sigma_i\M{E}_{ij}+\sigma_j\M{E}_{ji}}{\sqrt{\sigma_i^2+\sigma_j^2}},\M{0}_{m\times (T-m)}\right),\quad i<j\\
    \M{W}_{3,i}&=\sqrt{-1}\left(\M{E}_{ii},\M{0}_{m\times (T-m)}\right)\\
    \M{W}_{4,ij}&=\left(\M{0}_{m\times m},\M{E}_{ij}\right)\\
    \M{W}_{5,ij}&=\sqrt{-1}\left(\M{0}_{m\times m},\M{E}_{ij}\right)\\
\end{align}
\end{subequations}
Therefore we have:
\begin{subequations}
\begin{align}
    \phi_*(\M{V}_{1,2,ij})=\M{\Sigma}\M{V}_{1,2,ij}&=\sqrt{\frac{{\sigma_i^2+\sigma_j^2}}{2}}\M{W}_{1,2,ij}\\
    \phi_*\M{V}_{3:5,ij}=\M{\Sigma}\M{V}_{3:5,ij}&=\sigma_i\M{W}_{3:5,ij}
\end{align}
\end{subequations}
where $\phi_*$ denotes the push-forward of tangent vectors, thus $f$ can be calculated by \eqref{eq:appen:fcaluculation}. The desired \eqref{eq:appen:volume} then comes from using \eqref{eq:appen:volumetrans}.
\end{proof}

Substituting our results into \eqref{eq:appen:sigmaalpha} yields the asymptotic differential entropy and mutual information in \eqref{eq:appen:entropyform} and \eqref{eq:appen:mutualinformationform}.

Utilizing \eqref{eq:appen:sigmaalpha} and the fact that $\alpha_i\leq\alpha_{i+1}$, for $i<j$, we have
\begin{equation}
    \left(\sigma_i^2+\sigma_j^2\right)\doteq\snrc^{1-2\alpha_i}
\end{equation}
and the asymptotic entropy and mutual information is thus:
\begin{align}
    I_{\hcnb}({\RM{X}};\RM{Y}_\zz{c}) &\doteq h_{\V{\alpha}}(\RM{C}_1+\RM{Z}_\zz{c1}) \nonumber \\
    &\doteq \sum_{i=1}^{\rank{R}}(2T+1-2i)(0.5-\alpha_i)^+\log(\snrc).
\end{align}
\end{proof}

\section{Proof of Lemma \ref{lemma:singularvalue}}\label{appen:lemma:singularvalue}
\subsection{Proof of \eqref{eq:alphadistribution}}
\begin{proof}
Let the eigendecomposition of the transmit covariance be $\M{R}=\M{U}\M{\Lambda}\M{U}\ermtrans$, with $\M{\Lambda}=\zz{diag}(\lambda_1,\lambda_2,...,\lambda_{\rank{R}},0,...,0)\triangleq \zz{diag}(\V{\lambda}\trans,0,...,0)$,
Since elements of $\hc$ are i.i.d. white complex Gaussian distributed and $\RM{U}\in U(M)$, let $\M{\Lambda}_\zz{>0}=\zz{diag}(\V{\lambda})$, we have:
\begin{equation}
    \hc\M{R}\hc\ermtrans=\hc\M{U}\M{\Lambda}\M{U}\ermtrans\hc\ermtrans\sim\hc\M{\Lambda}\hc\ermtrans=\RM{H}_\zz{c1}\M{\Lambda}_\zz{>0}\RM{H}_\zz{c1}\ermtrans
\end{equation}
where $\RM{H}_\zz{c1}$ denotes the first $\rank{R}$ columns of $\hc$. Since $\RM{H}_\zz{c1}\M{\Lambda}_\zz{>0}\RM{H}_\zz{c1}\ermtrans=(\M{\Lambda}_{>0}^{1/2}\RM{H}_\zz{c1}\ermtrans)\ermtrans(\M{\Lambda}_{>0}^{1/2}\RM{H}_\zz{c1}\ermtrans)$ has the same eigenvalues as $(\M{\Lambda}_{>0}^{1/2}\RM{H}_\zz{c1}\ermtrans)(\M{\Lambda}_{>0}^{1/2}\RM{H}_\zz{c1}\ermtrans)\ermtrans$, one only needs to consider the eigenvalue distribution of 
\begin{equation}
    (\M{\Lambda}_{>0}^{1/2}\RM{H}_\zz{c1}\ermtrans)(\M{\Lambda}_{>0}^{1/2}\RM{H}_\zz{c1}\ermtrans)\ermtrans\sim \mathcal{W}_{\rank{R}}(\nc,\M{\Lambda}_{>0})
\end{equation}
where the RHS stands for the central complex Wishart distribution with $\nc$ degrees of freedom and covariance matrix $\M{\Lambda}_{>0}$. The eigenvalue distribution of such a random matrix is\cite{Jam:J64}:
\begin{equation}
    \frac{\det(\{e^{-a_j/\lambda_i}\})}{\det\M{\Lambda}_{>0}^{\nc}} \prod_{\ell=1}^{\rank{R}} \frac{a_\ell^{\nc-\rank{R}}}{(\nc-\ell)!} \prod_{k<\ell}^{\rank{R}} \frac{a_k - a_\ell}{\lambda_k - \lambda_\ell} \lambda_\ell \lambda_k.
\end{equation}
where $a_i=\sigma_i^2$ are the eigenvalues. We arrive at \eqref{eq:alphadistribution} by utilizing change of variables:
\begin{equation}
    \alpha_i=-\frac{\log a_i}{2\log \snrc}
\end{equation}
and accounting for the Jacobian determinant.
\end{proof}

\subsection{Proof of \eqref{eq:alphadistributionasymp}}
\begin{proof}
First, we focus on the part $ \alpha_{1}>0$, i.e. $\alpha_i>0$ for arbitrary $i\in \{1,2,...,\rank{R}\}$. Again, let $x_i=\snrc^{-2\alpha_i}$, as $\snrc\to\infty$, $x_i\to 0^+$,
we have
\begin{equation}\label{eq:appen:palpha}
    p(\V{\alpha})\propto\det\left(\left\{\frac{e^{-x_i}}{\lambda_j}\right\}\right)
    \prod_{m=1}^{\rank{R}}{x_m}^{(\nc-\rank{R}+1)}\prod_{k<l}^{\rank{R}}\left(x_k-x_l\right)
\end{equation}
focusing on the first term $\det\left(\left\{\frac{e^{-x_i}}{\lambda_j}\right\}\right)$ on the RHS, as $x_i\to 0^+$, we have
\begin{equation}
\begin{split}
    \det&\left(\left\{\frac{e^{-x_i}}{\lambda_j}\right\}\right)\\
    &=\det\left(\left\{\sum_{k=0}^{\rank{R}-1}\left(\frac{(-1)^k}{\lambda_j}\frac{1}{k!}x_i^k\right)(1+o(1))\right\}\right)\\
    &=\det\left(\left\{\sum_{k=0}^{\rank{R}-1}\frac{(-1)^k}{\lambda_j}\frac{1}{k!}\right\}\right)\det\left(\left\{x_i^k\right\}\right)(1+o(1))\\
    &\propto\prod_{k<l}^{\rank{R}}\left(x_k-x_l\right)(1+o(1)).
\end{split}
\end{equation}
Therefore,
\begin{equation}
\begin{split}
    p(\V{\alpha})&\propto
    \prod_{m=1}^{\rank{R}}{x_m}^{(\nc-\rank{R}+1)}\prod_{k<l}^{\rank{R}}\left(x_k-x_l\right)^2\\
    &\doteq  {\snrc}^{-\sum_{i=1}^{\rank{R}}2(\nc+\rank{R}+1-2i)\alpha_i}.
\end{split}
\end{equation}
and the first part of \eqref{eq:alphadistributionasymp} is proved. 

For the other part, if $\alpha_1<0$, by the decomposition law for determinants, the determinant term $\det\left(\left\{\frac{e^{-\snrc^{-2\alpha_i}}}{\lambda_j}\right\}\right)$ in \eqref{eq:appen:palpha} becomes
\begin{equation}
    \sum_k^{\rank{R}}(-1)^{k+1} \frac{e^{-\snrc^{-2\alpha_1}}}{\lambda_k}
    \underset{i\neq 1,j\neq k}{\det}\left(\left\{\frac{e^{-\snrc^{-2\alpha_i}}}{\lambda_j}\right\}\right).
\end{equation}
It is then trivial that $p(\V{\alpha})$ scales with $\snrc$ faster than $e^{-\snrc^{-2\alpha_1}}$, thus yielding the desired
\begin{equation}
    \lim_{\snrc\to\infty}\frac{\log p_{\RV{\alpha}}(\V{\alpha})}{\log\snrc}=-\infty.
\end{equation}
\end{proof}

\section{Proof of Theorem \ref{theo:final}}\label{appendix:theo:final}
\begin{proof}
To find $d^{\zz{out}}_{\rank{R}}(r)$ is to find the solution for the following optimization problem:
\begin{subequations}
\begin{align}
    \mathscr{P}_\zz{a-3}: \ \min\limits_{\V{\alpha}}\ & 2\left(\sum_{i=1}^{\rank{R}}(\nc+\rank{R}+1-2i)\alpha_i\right)\label{eq:appen:pa3opt}\\
    \zz{s.t.}\ \ \  
    &  \sum_{i}^{\rank{R}}(2T+1-2i)(0.5-\alpha_i)^+\leq Tr\\ \label{eq:appen:pa3cons}
    & 0\leq\alpha_1\leq\alpha_2\leq...\leq\alpha_{\min\left\{\nc, \rank{R}\right\}}
\end{align}
\end{subequations}
where \eqref{eq:appen:pa3opt} is $d^{\zz{out}}_{\rank{R}}(r)$. Similar to the approach in \cite{ZheTse:J03}, notice that the optimal solution for $r=k(1-\frac{k}{2T})\triangleq r(k)$ is
\begin{equation}
    \alpha_i=\begin{cases}
    0, & 1\leq i\leq k\\
    0.5, & k<i\leq\rank{R}.
    \end{cases}
\end{equation}
with $d^{\zz{out}}_{\rank{R}}(r(k))$ being:
\begin{equation}
    \begin{split}
    d^{\zz{out}}_{\rank{R}}(r(k))
    &=\sum_{i=k+1}^{\rank{R}}(\nc+\rank{R}+1-2i)\\
    &=(\nc-k)(\rank{R}-k).
    \end{split}
\end{equation}
For other $r$, as $r$ grows from $r(k)$ to $r(k)<r<r(k+1)$, the optimal $\alpha_{k+1}$ grows from $0$ towards $0.5$. It follows easily that the corresponding $d^{\zz{out}}_{\rank{R}}(r)$ is the affine combination of $d^{\zz{out}}_{\rank{R}}(k)$ and $d^{\zz{out}}_{\rank{R}}(k+1)$  as \eqref{eq:appen:pa3opt} is linear in $\V{\alpha}$.

After obtaining the outage probability, the outage bound \eqref{eq:newdmtupbound} can be proved using the same proof as \cite[Lem. 5]{ZheTse:J03} thus we omit the derivation for brevity.
\end{proof}

\ifCLASSOPTIONcaptionsoff
  \newpage
\fi
 \newpage
\end{document}